\let\oldvec\vec
\let\vec\oldvec
\newcommand {\ignore} [1] {}
\newcommand{\sem}    {\setminus}
\newcommand{\subs}   {\subseteq}
\newcommand{\empt}  {\emptyset}
\newcommand{\f}   {\frac}
\newcommand{\p}   {\partial}
\newcommand{\opt}   {\sf opt}
\newcommand{\A}    {\mathbb{A}}
\newcommand{\B}    {\mathbb{B}}
\newcommand{\C}    {\mathbb{C}}
\newcommand{\CC}    {{\cal C}}
\newcommand{\DD}    {{\cal D}}
\newcommand{\FF}    {{\cal F}}
\newcommand{\be}   {\beta}
\newcommand{\eps} {\epsilon}
\newcommand{\Ga} {\Gamma}
\newcommand{\De} {\Delta}
\newcommand{\SC} {{\sc Subset $k$-Connectivity}}
\newcommand{\RSC} {{\sc Rooted Subset $k$-Connectivity}}
\newcommand{\SeC} {{\sc Set-Cover}}
\newcommand{\RSCA} {{\sc Rooted Subset Connectivity Augmentation}}
\begin{document}

\title{Approximating $k$-connected $m$-dominating sets}

\author{Zeev Nutov}
\institute{The Open University of Israel, {\tt nutov@openu.ac.il}}

\maketitle

\begin{abstract}
A subset $S$ of nodes in a graph $G$ is a {\bf $k$-connected $m$-dominating set} ({\bf $(k,m)$-cds})
if the subgraph $G[S]$ induced by $S$ is $k$-connected and every $v \in V \setminus S$ has at least $m$ neighbors in $S$.
In the {\sc $k$-Connected $m$-Dominating Set} ({\sc $(k,m)$-CDS}) 
problem the goal is to find a minimum weight $(k,m)$-cds in a node-weighted graph.
For $m \geq k$ we obtain the following approximation ratios. 
For general graphs our ratio $O(k \ln n)$ improves  the previous best ratio $O(k^2 \ln n)$ of \cite{N-CDS}
and matches the best known ratio for unit weights of \cite{ZTHD}.
For unit disc graphs we improve the ratio $O(k \ln k)$ of \cite{N-CDS} to 
$\min\left\{\f{m}{m-k},k^{2/3}\right\} \cdot O(\ln^2 k)$  -- this is the first sublinear ratio for the problem, 
and the first polylogarithmic ratio $O(\ln^2 k)/\eps$ when $m \geq (1+\eps)k$;
furthermore, we obtain ratio $\min\left\{\f{m}{m-k},\sqrt{k}\right\} \cdot O(\ln^2 k)$ for uniform weights.
These results are obtained by showing the same ratios for the {\SC} problem when the 
set $T$ of terminals is an $m$-dominating set with $m \geq k$.
\end{abstract}

\noindent
{\bf Keywords:} 
$k$-connected graph; $m$-dominating set; approximation algorithm;  
rooted subset $k$-connectivity; subset $k$-connectivity

\section{Introduction} \label{s:intro}

All graphs in this paper are assumed to be simple, unless stated otherwise.
A (simple) graph is {\bf $k$-connected} if it has 
$k$ pairwise internally node disjoint paths between every pair of its nodes;
in this case the graph has at least $k+1$ nodes. 
A subset $S$ of nodes in a graph $G$ is a {\bf $k$-connected set} if the subgraph $G[S]$ induced by $S$ is $k$-connected;
$S$ is an {\bf $m$-dominating set} if every $v \in V \setminus S$ has at least $m$ neighbors in $S$.
If $S$ is both $k$-connected and $m$-dominating set then 
$S$ is a {\bf $k$-connected $m$-dominating set}, or {\bf $(k,m)$-cds} for short.
A graph is a {\bf unit-disk graph}  if its nodes can be located in 
the Euclidean plane such that there is an edge between nodes $u$ and $v$ 
iff the Euclidean distance between $u$ and $v$ is at most $1$.
We consider the following problem for $m \geq k$ both in general graphs and in unit-disc graphs.


\begin{center} \fbox{\begin{minipage}{0.965\textwidth} \noindent
{\sc $k$-Connected $m$-Dominating Set} ({\sc $(k,m)$-CDS}) \\
{\em Input:}  A graph $G=(V,E)$ with node weights $\{w_v:v \in V\}$ and integers $k,m$. \\
{\em Output:}   A minimum weight $(k,m)$-cds $S \subseteq V$.
\end{minipage}}\end{center}

For motivation  we refer the reader to recent papers of 
Zhang, Zhou, Mo, and Du \cite{ZW} and of Fukunaga \cite{Fu}, where they obtained in unit-disc graphs
ratios $O(k^3 \ln k)$ and  $O(k^2 \ln k)$, respectively.
This was improved to $O(k \ln k)$ in \cite{N-CDS}, where is also given ratio $O(k^2 \ln n)$ in general graphs.
Our main results is:

\begin{theorem} \label{t:CDS}
{\sc $(k,m)$-CDS} with $m \geq k$ admits the following approximation ratios: 
$O(k \ln n)$ in general graphs, $\min\left\{\f{m}{m-k},k^{2/3}\right\} \cdot O(\ln^2 k)$ in unit disc graphs,
and $\min\left\{\f{m}{m-k},\sqrt{k}\right\} \cdot O(\ln^2 k)$ in unit disc graphs with unit weights.
\end{theorem}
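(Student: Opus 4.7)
The plan is to reduce the problem, as the abstract suggests, to the \SC{} problem with an $m$-dominating terminal set. I would proceed in two phases. Phase one computes a light $m$-dominating set $T$ by casting ``cover each $v \in V$ by $m$ of its neighbors'' as a \SeC{} instance; any $(k,m)$-cds is feasible for this covering problem, so standard greedy or LP rounding yields a set of weight $O(\ln n) \cdot \opt$. Phase two augments $T$ to a $k$-connected set by solving \SC{} on $G$ with terminal set $T$ and zero weight on $T$; if this admits ratio $\alpha$, the overall guarantee is $\alpha + O(\ln n)$, and Theorem~\ref{t:CDS} reduces to establishing the three claimed bounds for \SC{} when $T$ is $m$-dominating with $m \geq k$.

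For \SC{} I would use the well-known reduction to \RSC{} (pick a root $r \in T$ and demand $k$ openly disjoint $r$-to-$t$ paths for each other terminal), which in turn unfolds into a sequence of \BFC{} or \BFUC{} instances. In general graphs each biset edge cover subinstance admits an $O(\ln n)$-approximation by LP rounding, and $k$ successive augmentations account for the $O(k \ln n)$ bound. In unit-disc graphs, the geometric fact that the independence number within any neighborhood is $O(1)$ controls the deficiency structure at each biset and yields a polylogarithmic $\ln^2 k$ factor in place of $\ln n$; by itself this replaces $k \ln n$ with $k \cdot O(\ln^2 k)$, and under uniform weights a sharper LP-rounding argument further improves the dependence on $k$ to $\sqrt{k} \cdot O(\ln^2 k)$.

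The principal obstacle is the $m/(m-k)$ factor, which is the genuinely new ingredient enabled by the $m$-dominating assumption. The natural attack is to exploit that every non-terminal has $\geq m$ neighbors in $T$, so after any $k$ of them have been used as ``relays'' at least $m-k$ remain free to absorb further augmentation demand. Concretely, I would partition the connectivity demand into $\lceil k/(m-k) \rceil$ layers, route each layer through a disjoint portion of the surplus neighborhoods, and solve each layer as a \BFUC{} instance of cost at most $O(\ln k) \cdot \opt$; summing gives the desired $\tilde O(m/(m-k))$ factor. The delicate point, and where I expect the work to concentrate, is verifying that the residual biset family after peeling off each layer still enjoys enough $m$-domination slack for the recursion to go through; this will likely require a careful uncrossing of the tight bisets together with a judicious choice of the layer-by-layer cost function that charges each covering edge to the neighborhood surplus rather than to $T$ itself.
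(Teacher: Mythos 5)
Your overall frame --- compute an $m$-dominating set $T$ and then solve {\SC} with terminal set $T$, justified by Lemma~\ref{l:kc} --- is exactly the paper's reduction, but the technical heart of the argument is missing, and the two places where you gesture at it contain real gaps. First, in general graphs you assert that ``$k$ successive augmentations account for the $O(k\ln n)$ bound.'' This is precisely what does \emph{not} follow from the known machinery: each augmentation step (raising connectivity from $\ell$ to $\ell+1$) decomposes into $O(\ell/q)$ uncrossable biset-family covering problems, where $q$ is the minimum number of terminals in a deficient biset, so without controlling $q$ you only recover the previous $O(k^2\ln n)$ bound. The paper's new ingredient is an inflation lemma (Lemma~\ref{l:RS}): by solving a {\sc Centered} {\RSC} instance (ratio $O(\ln\De)$) $p=\max\{2k-m-1,1\}$ times and adding the solutions to $T$ --- harmless by Lemma~\ref{l:kc} since $T$ is $k$-dominating --- one forces every deficient biset $\A$ to satisfy $|A\cap T|\geq m-\ell+p(k-\ell)\geq k$, making the instance $T$-independence-free so that the $O(k\ln|T|)$ algorithm of \cite{N-TALG} applies in one shot, followed by a forest augmentation costing only $k-1$. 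The crucial use of $m$-domination is the inequality $|A\cap T_0|\geq m-\ell$, obtained by taking a non-terminal $v\in A$ (whose existence is itself guaranteed by the first inflation round) with at least $m$ neighbors in $T_0$, at most $\ell=|\p\A|$ of which lie outside $A$. Your proposal has no substitute for this step.

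Second, your treatment of the $m/(m-k)$ factor and of unit disc graphs does not match anything that would work as stated. The paper does not ``partition the connectivity demand into $\lceil k/(m-k)\rceil$ layers''; instead, at each augmentation level $\ell$ it covers the cores of the deficient family $\DD_T$ via a {\SeC}-with-capacities instance, using Zhang et al.'s degree-$5k$ spanning subgraph (Theorem~\ref{t:ZW}) to bound set sizes by $5k$ and hence obtain cost $O(\ln k)\cdot\opt/(k-\ell)$ with no $\ln n$; repeating this $p$ times pushes $q$ up to $m+p-\ell$, and only then does it run the root decomposition of \cite{N-subs} ($|R|=O(\f{|T|}{|T|-\ell}\ln\ell)$ roots) and the uncrossable decomposition of \cite{N-TALG} ($O(\f{q+\ell}{q})$ families per root, each covered by Fukunaga's primal-dual algorithm at cost $O(\opt/(k-\ell))$). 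The product $\f{|T|}{|T|-\ell}\cdot\f{q+\ell}{q}\approx\f{(m+p)^2}{(m+p-\ell)^2}$, traded off against the $p$ inflation rounds and optimized at $p=1$ or $p=\ell^{2/3}$, is where $\min\left\{\f{m}{m-k},k^{2/3}\right\}$ comes from; the $\sqrt{k}$ bound for unit weights comes from padding $T$ with $\ell$ arbitrary nodes so that $|R|=O(\ln\ell)$, not from ``a sharper LP-rounding argument.'' Finally, computing the initial $m$-dominating set by generic greedy {\SeC} at ratio $O(\ln n)$ would already ruin the unit disc bounds, which contain no $\ln n$ term; one needs the $O(1)$ ratio of \cite{Fu} there.
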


For general graphs our ratio $O(k \ln n)$ improves  the previous ratio $O(k^2 \ln n)$ of \cite{N-CDS}
and matches the best known ratio for unit weights of \cite{ZTHD}.
For unit disc graphs our ratio $\min\left\{\f{k}{m-k},k^{2/3}\right\} \cdot O(\ln^2 k)$
improves the previous best ratio $O(k \ln k)$ of \cite{N-CDS}; this is the first sublinear ratio for the problem, 
and for any constant $\eps>0$ and $m=k(1+\eps)$ the first polylogarithmic ratio $O(\ln^2 k)/\eps$.

Let us say that a graph with a set $T$ of terminals and a root $r \in T$ is 
{\bf $k$-$(T,r)$-connected} if it has $k$ internally node disjoint $rt$-paths for every $t \in T \sem \{r\}$.
Similarly, a graph is {\bf $k$-$T$-connected} if it has $k$ internally node disjoint $st$-paths for every $s,t \in T$.
A reason why the case $m \geq k$ is easier than the case $m<k$ 
is given in the following statement (a proof can be found in \cite{ZW,Fu,N-CDS}).

\begin{lemma} \label{l:kc}
Let $T$ be a $k$-dominating set in a graph $H=(U,F)$.
If $H$ is $k$-$(T,r)$-connected then $H$ is $k$-$(U,r)$-connected;
if $H$ is $k$-$T$-connected then $H$ is $k$-connected.
\end{lemma}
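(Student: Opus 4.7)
The plan is to prove both claims via Menger's theorem: to show $k$-connectivity between two nodes, it suffices to rule out node cuts of size less than $k$ separating them. So I would fix an arbitrary separator $C$ with $|C| < k$ and derive a contradiction by exhibiting a path in $H - C$ between the two nodes in question.

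For the first claim, I would take any $u \in U \setminus \{r\}$ and a candidate $(r,u)$-separator $C \subseteq U \setminus \{r,u\}$ with $|C| < k$. If $u \in T$, the hypothesis that $H$ is $k$-$(T,r)$-connected together with Menger's theorem immediately gives $|C| \geq k$, a contradiction. Otherwise $u \in U \setminus T$, so the $k$-dominating property provides at least $k$ neighbors of $u$ in $T$. Since $|C| < k$, one such neighbor $t \in T$ lies outside $C$. If $t = r$, the edge $rt = ru$ is an $r$-$u$ path avoiding $C$. If $t \neq r$, then $H$ being $k$-$(T,r)$-connected guarantees an $r$-$t$ path $P$ in $H - C$; appending the edge $tu$ (and truncating at the first occurrence of $u$ if $u$ already lies on $P$) yields an $r$-$u$ path avoiding $C$, contradicting that $C$ separates them.

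For the second claim, the argument is the same but applied on both endpoints. Given $u,v \in U$ and a putative separator $C$ of size less than $k$, I split into cases based on whether $u,v$ lie in $T$. If both are in $T$, the $k$-$T$-connectedness of $H$ finishes the argument directly. If, say, $u \notin T$, I use $k$-domination to pick a neighbor $t_u \in T \setminus C$ of $u$, and symmetrically $t_v \in T \setminus C$ for $v$ when $v \notin T$ (or set $t_v = v$ otherwise). Either $t_u = t_v$ (giving a short $u$-$t_u$-$v$ path in $H - C$) or $t_u \neq t_v$, in which case $k$-$T$-connectedness produces a $t_u$-$t_v$ path in $H - C$ that I extend by the edges $u t_u$ and $t_v v$; as before, if $u$ or $v$ already appears internally, I truncate. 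The resulting $u$-$v$ walk avoids $C$, contradicting the separator property. I should also briefly note that $|U| \geq k+1$ holds since $k$-$T$-connectedness already requires $k$ internally disjoint paths between two distinct terminals.

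The main potential obstacle is bookkeeping around whether $u$ or $v$ happens to already lie on the path produced by Menger's theorem, and whether the neighbors $t_u, t_v$ coincide with $r$ or with one of the endpoints; these are handled routinely by truncating or by taking the adjacency as the path. Overall the argument is a short, self-contained Menger's-theorem extension, and no result beyond the definitions in the excerpt is needed.
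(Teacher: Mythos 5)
The paper does not prove this lemma itself---it cites \cite{ZW,Fu,N-CDS} for a proof---so there is nothing internal to compare against, but your Menger-based argument is exactly the standard one used in those references: a separator of size less than $k$ cannot isolate a non-terminal from $T$, because $k$-domination supplies a neighbor in $T$ outside the separator, and then terminal connectivity routes around it. The argument is correct; the only loose end is the usual caveat that Menger's min-separator formulation applies to non-adjacent pairs (for an adjacent pair no separator exists, so one either notes the contradiction is immediate or passes to $H$ minus that edge), which is the same routine bookkeeping you already flag for truncation and for $t_u$ coinciding with $r$ or an endpoint.
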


The above lemma implies that in the case $m \geq k$ {\sc $(k,m)$-CDS} partial solutions have the property 
that the union of a partial solution and a feasible solution is always feasible -- this enables to construct the solution iteratively.
Specifically, most algorithms for the case $m \geq k$ start by computing just an $m$-dominating set $T$;
the best ratios for {\sc $m$-Dominating Set} are $\ln(\De+m)$ in general graphs \cite{Fo} and $O(1)$ in unit disc graphs \cite{Fu}.
By invoking just these ratios, Lemma~\ref{l:kc} enables to reduce {\sc $(k,m)$-CDS} with $m \geq k$ 
to following (node weighted) problem:

\begin{center} \fbox{\begin{minipage}{0.965\textwidth} \noindent
{\SC}  \\
{\em Input:}  A graph $G=(V,E)$ with node-weights $\{w_v:v \in V\}$, a set $T \subseteq V$ of terminals, and an integer $k$. \\
{\em Output:}   A minimum weight $k$-$T$-connected subgraph of $G$. 
\end{minipage}}\end{center}

The ratios for this problem are usually expressed in terms of the best known ratio $\be$ for the following problem
(in both problems we will assume w.l.o.g. that $w_v=0$ for all $v \in T$):

\begin{center} \fbox{\begin{minipage}{0.965\textwidth} \noindent
{\RSC}  \\
{\em Input:}  A graph $G=(V,E)$ with node-weights $\{w_v:v \in V\}$, a set $T \subseteq V$ of terminals, 
a root node $r \in T$, and an integer $k$. \\
{\em Output:}   A minimum weight $k$-$(T,r)$-connected subgraph of $G$. 
\end{minipage}}\end{center}

Currently, $\be=O(k^2\ln |T|)$ \cite{N-TALG}.
From previous work it can be deduced that {\SC} with $|T| \geq k$ admits ratio $\be+k^2$.
Add a new root node $r$ connected to a set $R \subs T$ of $k$ nodes by edges of cost zero.
Then compute a $\beta$-approximate solution to the obtained {\RSC} instance. Finally, augment this solution 
by computing for every $u,v \in R$ a min-weight set of $k$ internally disjoint $uv$-paths.
For the {\sc $(k,m)$-CDS} problem with $m \geq k$ this already gives ratio $\be+k^2=O(k^2\ln |T|)$ in general graphs.
For the special case when $T$ is a $k$-dominating set the ratio $\be+k^2$ was improved in \cite{N-CDS} to $\be+k-1$,
since then in the final step it is sufficient to compute a min-weight set of $k$ internally disjoint $uv$-paths
for pairs that form a forest on $R$.

We now consider unit disc graphs. Zhang et. al. \cite{ZW} showed that 
any $k$-connected unit disc graph has a $k$-connected spanning subgraph of maximum degree $\leq 5k$.
This implies that the node weighted case is reduced with a loss of factor $O(k)$ 
to the case of node induced edge costs -- when $c_{uv}=w_u+w_v$ for every edge $e=uv \in E$.
The edge costs version of {\SC} admits ratio $O(k^2\ln k)$, which gives ratio $O(k^3 \ln k)$ 
for {\sc $(k,m)$-CDS} with $m \geq k$ in unit disc graphs.
Fukunaga \cite{Fu} obtained ratio $O(k^2\ln k)$ using a different approach -- he considered the {\RSCA} problem,
when $G[T]$ is $\ell$-$(T,r)$-connected and we seek a minimum weight $S \subs V \sem T$ such that $G[T \cup S]$ is 
$(\ell+1)$-$(T,r)$-connected. In \cite{N-TALG} it is shown that the augmentation problem decomposes into $O(k)$ ``uncrossable'' subproblems,
and Fukunaga \cite{Fu} designed a primal-dual $O(1)$-approximation algorithm for such an uncrossable subproblem in unit disc graphs.
This gives ratio $O(\ell)$ for {\RSCA} in unit disc graphs.
Furthermore, using the so called ``backward augmentation analysis'' Fukunaga showed that since his approximation is w.r.t. an LP, 
then sequentially increasing the $T$-connectivity by $1$ invokes only a factor of $O(\ln k)$, thus obtaining ratio
$O(k \ln k)$ for {\RSCA}. He then combined this result with a decomposition of the {\SC} problem into $k$ {\RSC} problems, 
and obtained ratio $O(k^2 \ln k)$.
As was mentioned, in \cite{N-CDS} it is proved that ratio $\be$ for {\RSC} implies ratio $\be+k-1$ for {\sc $(k,m)$-CDS} with $m \geq k$,
which improves the ratio to $O(k \ln k)$.

However, it seems that previous reductions and methods alone do not enable to obtain 
ratio better than $O(k^2 \ln |T|)$ in general graphs, or a sublinear ratio in unit disc graphs.
These algorithm rely on the ratios and decompositions for the {\sc Rooted}/{\SC} problems from \cite{N-TALG,N-subs},
but these do not consider the specific feature relevant to {\sc $(k,m)$-CDS}  -- 
that the set $T$ of terminals is an $m$-dominating set;
note that then {\SC} is equivalent to the problem of finding the 
lightest $k$-connected subgraph containing $T$, by Lemma~\ref{l:kc}.
Here we change this situation by asking the following question: \\
{\em If the set $T$ of terminals is an $m$-dominating set with $m \geq k$,
what approximation ratios can we achieve for (node weighted) {\SC}?} \\
Our answer to this question is given in the following theorem, which is of independent interest,
and note that it implies Theorem~\ref{t:CDS}.

\begin{theorem} \label{t:SC}
The (node weighted) {\SC} problem such that $T$ is an $m$-dominating set with $m \geq k$ admits the following approximation ratios: 
$O(k \ln n)$ in general graphs, $O(\ln^2 k) \cdot \min\left\{\f{m}{m-k+1},k^{2/3}\right\}$ in unit disc graphs,
and $O(\ln^2 k) \cdot \min\left\{\f{m}{m-k+1},\sqrt{k}\right\}$ in unit disc graphs with unit weights.
\end{theorem}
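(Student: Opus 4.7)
The plan is to decompose {\SC} into a sequence of $k$ augmentation problems and exploit the $m$-dominating property of $T$ to obtain sharper per-step bounds than those available in previous work. Starting from a subgraph $H_{\ell-1}$ that is $(\ell-1)$-$T$-connected, I would add a small-weight node set $S_\ell$ so that the result becomes $\ell$-$T$-connected. By Lemma~\ref{l:kc} any feasible solution to the original instance remains a valid augmenting set for any intermediate $H_{\ell-1}$, so the total ratio is bounded by the sum of per-step ratios. The new ingredient I would establish is a \emph{density bound} on the step-$\ell$ deficient bisets: because every non-terminal has at least $m \geq k$ neighbors in $T$, no deficient biset at step $\ell \leq k$ can contain a non-terminal in its inner part without its $(\ell-1)$-size boundary absorbing most of that node's $T$-neighbors, leaving at least $m-\ell+1$ ``free'' terminal-crossings for every such biset.

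For the $O(k \ln n)$ ratio in general graphs, I would cast each augmentation step as a biset-family edge cover ({\BFC}) problem and apply a greedy/LP-rounding algorithm whose ratio is logarithmic in the frequency. Combined with the density bound this yields $O(\ln n)$ per step and $O(k \ln n)$ overall, replacing the wasteful $O(k^2 \ln n)$ estimate obtained by the $\beta+k-1$ reduction with $\beta = O(k^2 \ln|T|)$.

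For the unit disc graph bounds, I would combine Fukunaga's decomposition of each augmentation step into $O(1)$ uncrossable biset-function cover subproblems with his primal-dual $O(1)$-approximation per subproblem and his backward augmentation LP-analysis that contributes an $O(\ln k)$ factor across the $k$ steps. Using the density bound to show that the integrality gap of the step-$\ell$ biset cover LP is at most $\frac{m}{m-\ell+1}$, and feeding this into the backward-augmentation telescoping, would yield $\frac{m}{m-k+1} \cdot O(\ln^2 k)$. For the $m$-independent $k^{2/3}$ (respectively $\sqrt{k}$ in the unit-weight case) bound, I would use a two-phase threshold: run the per-step algorithm for the first $t$ augmentations, and for the remaining $k-t$ steps use a coarser covering that exploits the bounded-degree spanner of Zhang et al.\ from~\cite{ZW}; balancing $t$ against the two phase costs yields the stated bounds, with a tighter balance available for unit weights.

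The main obstacle I expect is the combination of the density bound with Fukunaga's backward augmentation analysis --- showing that his LP-based telescoping still goes through when the step-$\ell$ LP has the improved $\frac{m}{m-\ell+1}$ gap rather than the worst-case bound. Once this technical reconciliation is in place, the harmonic-sum arithmetic and the $\ln^2 k$ factor follow from essentially unmodified arguments, and the novel content lies in the structural biset-cover analysis that extracts the $m$-dependent slack from the $m$-dominating hypothesis on $T$.
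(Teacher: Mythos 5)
Your high-level skeleton (sequential augmentation, a terminal-density bound extracted from $m$-domination, harmonic summation over the $1/(k-\ell)$ factors) matches the paper, but the proposal is missing the paper's central new ingredient, and the density bound you rely on is both too weak and not universally valid. First, the bound ``every step-$\ell$ deficient biset has at least $m-\ell+1$ free terminal-crossings'' only applies to deficient bisets whose inner part contains a non-terminal; a deficient biset $\A$ with $A \subseteq T$ can have $|A \cap T|=1$. More importantly, even where it applies, $q \geq m-\ell$ is useless at the late stages: for $m=k$ and $\ell=k-1$ it gives only $q \geq 1$, and the decomposition of the step-$\ell$ augmentation problem from \cite{N-TALG} produces $O(\ell/q)$ uncrossable subfamilies, so you recover only $O(k\ln n)$ per step and $O(k^2 \ln n)$ overall --- not the claimed $O(\ln n)$ per step. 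The augmentation problem is not a single biset-family cover with a logarithmic greedy ratio; that situation only arises once the family is independence-free, i.e.\ once $q \geq k$.

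The missing idea is the paper's \emph{inflation} step: cheaply add node sets $S_1,\ldots,S_p$ to $T$ so that every deficient biset is forced to contain nodes of each $S_i$, which combined with $m$-domination boosts $q$ to $m-\ell+p(k-\ell)$ in general graphs and to $m-\ell+p$ in unit disc graphs. In general graphs each $S_i$ is obtained by solving a {\sc Centered} {\RSC} instance at cost $O(\ln\De)\cdot\opt$, and $p=\max\{2k-m-1,1\}$ rounds make the instance $T$-independence-free, after which the known $O(k\ln|T|)$ ratio of \cite{N-TALG} is invoked in one shot (followed by a forest of $k$-flow augmentations costing only $k-1$), rather than running a per-step $O(\ln n)$ cover. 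In unit disc graphs each $S_i$ covers the cores of $\DD_T$ via a reduction to {\SeC} with soft capacities --- this is where the degree-$5k$ spanner of \cite{ZW} enters, not as a ``coarser covering for the remaining steps'' --- and the $k^{2/3}$ and $\sqrt{k}$ bounds come from balancing, \emph{within each augmentation step}, the number $p$ of inflation rounds against the $O\left(\f{|T|}{|T|-\ell}\cdot\f{q+\ell}{q}\right)$ subfamilies produced by the root and uncrossability decompositions, not from a two-phase threshold across steps and not from an integrality-gap bound of $\f{m}{m-\ell+1}$ on the step-$\ell$ LP. Without the inflation mechanism neither the $O(k\ln n)$ bound nor the sublinear unit-disc bounds follow from the ingredients you list.
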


In the proof of Theorem~\ref{t:SC} we use several results and ideas from previous works \cite{N-TALG,N-subs,ZW,Fu,N-CDS}. 
As was mentioned, the best ratios for the {\SC} are derived via reductions of \cite{N-subs,N-CDS} from the ratios for the {\RSC} problem, 
so we will consider the latter problem; the currently best known ratio for this problem is $O(k^2 \ln |T|)$ \cite{N-TALG}.
The algorithm of \cite{N-TALG} works in $\ell$ iterations, where at iteration $\ell=0,\ldots,k-1$ it considers the augmentation problem of increasing 
the connectivity from $\ell$ to $\ell+1$. This is equivalent to covering a certain family $\FF$ of ``tight sets'' (a.k.a. ``deficient sets''), 
and  the algorithm of \cite{N-TALG} decomposes this problem into $O(\ell)$ uncrossable family covering problems;
the ratio for covering each uncrossable family is $O(\ln n)$ in general graphs \cite{N-TALG} and $O(1)$ in unit disc graphs \cite{Fu}.

However, a more careful analysis of the \cite{N-TALG} algorithm reveals that in fact the  
number of uncrossable families is $O(\ell/q)+1$, where $q$ is the minimum number of terminals in a tight set.
Specifically, the algorithm has an ``inflation phase'' that works towards reaching $q \geq \ell+1$ -- 
in which case the entire family of tight sets becomes uncrossable, by repeatedly covering $O(\ell/q)$ uncrossable families to double $q$. 
Hence if $q_0$ is the initial value of $q$, the total number of uncrossable families that the algorithm covers is $1$ plus order of
$\f{\ell}{q_0}\left(1+\f{1}{2}+\f{1}{4}+ \cdots\right)=O(\ell/q_0)$.
Note that a large part of the uncrossable families are covered at the beginning -- when $q$ is small.
One of our contributions is designing a different ``lighter'' inflation algorithms for increasing the parameter $q$.
These algorithms just aim to cover the inclusion minimal tight sets by adding a light set $S$ of nodes, 
and then add $S$ to the set $T$ of terminals;
if $T$ is a $k$-dominating set then adding any set $S$ to $T$ does not make the problem harder, by Lemma~\ref{l:kc}.

Our algorithms for covering inclusion minimal tight sets reduce the problem to a set covering type problem.
In the case of general graphs the reduction is to a special case considered in \cite{N-SL} of the {\sc Submodular Covering} problem;
the ratio invoked by this procedure is only $O(\ln n)$ and if we apply it $p=\max\{2k-m-1,1\}$ times 
then we get $q \geq m-\ell+p(k-\ell) \geq k$ for all $\ell=0,\ldots,k-1$.
In fact, we apply this procedure before considering the augmentation problems, 
but it guarantees that $q \geq k$ through all augmentation iterations.
The same procedure applies in the case of unit disc graphs, 
but to avoid the dependence on $n$ in the ratio we use a different procedure. 
Specifically, we use the result of Zhang et. al. \cite{ZW} 
that minimally $k$-connected unit disc graph has maximum degree $\leq 5k$,
to reduce the problem of covering the family of tight sets to the {\SeC} problem with soft capacities.
This approach gives ratio $\min\left\{\f{m}{m-k},k^{2/3}\right\} \cdot O(\ln^2 k)$.

In the rest of the paper we prove Theorem~\ref{t:SC};
Section \ref{s:gen} considers general graphs and Section \ref{s:udg} considers unit disc graphs.

\section{General graphs} \label{s:gen}

While edge-cuts of a graph correspond to node subsets, a natural way to
represent a node-cut of a graph is by a pair of sets, as follows.

\begin{definition}
An ordered pair $\A=(A,A^+)$ of subsets of $V$ with $A \subs A^+$ is called a {\bf biset};
the set $\p\A=A^+ \sem A$ is called the {\bf cut} of $\A$. 
We say that $\A$ is a $(T,r)$-biset if $A \cap T \neq \empt$ and $r \in V \sem A^+$.
An {\bf edge covers a biset $\A$} if it has one end in $A$ and the other in $V \sem A^+$.
For an edge set/graph $J$ let $d_J(\A)$ denote the number of edges in $J$ that cover $\A$.
\end{definition}

By  Menger’s Theorem $G$ 
is $k$-$(T,r)$-connected iff $|\p\A|+ d_{G[T]}(\A) \geq k$ holds for every $(T,r)$-biset $\A$.
Given a {\RSC} in\-stance, we say that a $(T,r)$-biset $\A$ is a {\bf deficient biset} if $|\p\A|+d_{G[T]}(\A) \leq k-1$.

We use the algorithm from \cite{N-TALG} for {\RSC}.
Two deficient bisets $\A,\B$ are {\bf $T$-dependent} if $A \cap T \subs \p\A$ or $B \cap T \subs \p\B$.
A {\RSC} instance is {\bf $T$-independence-free} if no pair of bisets are $T$-independent.
We have the following from previous work \cite{N-TALG}.

\begin{theorem} [\cite{N-TALG}] \label{t:ind}
$T$-independence-free {\RSC} instances admit ratio $O(k \ln |T|)$.
\end{theorem}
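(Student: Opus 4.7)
The plan is to use the standard iterative connectivity-augmentation scheme in $k$ phases and to show that, under the $T$-independence-free hypothesis, each phase reduces to a covering problem on an \emph{uncrossable} biset family, which in the node-weighted setting admits ratio $O(\ln|T|)$. Starting from the empty subgraph, for $\ell=0,1,\ldots,k-1$ I would augment the current $\ell$-$(T,r)$-connected subgraph $H_\ell$ to an $(\ell{+}1)$-$(T,r)$-connected one by choosing a minimum-weight node set $S_\ell \subs V\sem T$ whose incident edges cover every \emph{tight} biset, i.e., every $(T,r)$-biset $\A$ with $|\p\A|+d_{H_\ell[T]}(\A)=\ell$. By the Menger-type biset characterization stated just before the theorem, this restores $(\ell{+}1)$-$(T,r)$-connectivity, and since each $H_\ell$ is a subgraph of any optimal solution, each phase cost is at most $\opt$; a per-phase $\alpha$-approximation therefore yields total ratio $O(k\alpha)\cdot\opt$, so it suffices to achieve $\alpha=O(\ln|T|)$.

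The key structural claim is that the family $\FF_\ell$ of tight $(T,r)$-bisets at iteration $\ell$ is uncrossable: for any $\A,\B\in\FF_\ell$ at least one of the canonical uncrossed pairs
\[
\{\A\cap\B,\ \A\cup\B\} \qquad\text{or}\qquad \{\A\sem\B,\ \B\sem\A\}
\]
lies entirely in $\FF_\ell$, where biset operations act componentwise on the inner and outer sets. To prove this I would combine submodularity of $|\p\cdot|$ with supermodularity of $d_{G[T]}(\cdot)$ on bisets, together with a case analysis on how $A\cap T$ and $B\cap T$ sit with respect to $\p\A$ and $\p\B$. The one configuration in which the standard uncrossing inequalities fail to transfer deficiency to both uncrossed bisets is exactly the case in which neither $A\cap T\subs \p\A$ nor $B\cap T\subs \p\B$ holds --- but that is precisely a $T$-independent pair, which is excluded by the hypothesis. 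So the $T$-independence-free property is exactly what is needed to force the "good" side of the uncrossing dichotomy.

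With uncrossability of $\FF_\ell$ in hand, the phase cost is bounded by invoking the known $O(\ln|T|)$-approximation for covering an uncrossable biset family by a minimum-weight node set. This would be a layered LP-rounding or primal-dual procedure in which each biset in the uncrossed laminar witness structure is charged to a terminal in its inner set, and the logarithmic overhead comes from a standard bucketing of node weights into $O(\ln|T|)$ scales. Multiplying the per-phase ratio $O(\ln|T|)$ by the $k$ augmentation phases yields the desired ratio $O(k\ln|T|)$.

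The main obstacle will be the uncrossability argument for $\FF_\ell$. Biset uncrossing is delicate because the cut $\p\A$ is part of the biset data and interacts with $T$ both through $A\cap T$ and through the edges of $G[T]$ that cover $\A$; moreover, the $(T,r)$-biset conditions $A\cap T\neq \empt$ and $r\in V\sem A^+$ must be maintained as one uncrosses. I expect the proof to hinge on a careful enumeration of how the pair $(\A,\B)$ distributes relative to $T$ and to the root, invoking the $T$-independence-free hypothesis at the single point where the submodular/supermodular chain would otherwise fail to yield deficiency on both output bisets; everything else is a standard application of the cover-uncrossable-family machinery in the node-weighted biset setting.
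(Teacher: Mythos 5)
This theorem is imported from \cite{N-TALG}; the paper gives no proof of it, so your plan can only be compared against the argument the paper itself sketches in the introduction and via Lemma~\ref{l:tic} and Theorem~\ref{t:decomp}. At the architectural level your reconstruction is the right one and matches \cite{N-TALG}: $k$ sequential augmentation phases, each phase's optimum bounded by $\opt$ (because the optimal node set remains a feasible augmenting set at every phase --- not, as you write, because $H_\ell$ is a subgraph of the optimum, which need not hold); under independence-freeness the family of tight bisets in each phase is genuinely uncrossable, rather than merely decomposable into $O(\ell)$ uncrossable pieces as in the general case; and covering a single node-weighted uncrossable biset family costs $O(\ln |T|)$ by the known black box. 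You should also record that independence-freeness persists across phases: any biset tight for the augmented graph at phase $\ell$ satisfies $|\p\A|+d_{G[T]}(\A)\le \ell\le k-1$ and hence is already a deficient biset of the original instance.

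The one substantive slip is at the crux, where you identify the failing configuration of the uncrossing as ``neither $A\cap T\subs\p\A$ nor $B\cap T\subs\p\B$.'' Since $A\cap\p\A=\empt$ and every $(T,r)$-biset has $A\cap T\neq\empt$, the condition $A\cap T\subs\p\A$ can never hold, so as written your ``bad case'' occurs for every pair and the hypothesis would exclude nothing (you have inherited a typo in the paper's definition of $T$-dependence). The actual obstruction, and the meaning of $T$-independence consistent with Lemma~\ref{l:ind}, is the cross condition: $A\cap T\subs\p\B$ or $B\cap T\subs\p\A$. Indeed, if $A\cap B\cap T\neq\empt$ the $T$-intersecting move applies; otherwise $A\cap T$ avoids $B$, and the $T$-co-crossing move applies unless $A\cap T$ is swallowed by $\p\B$ (or symmetrically $B\cap T$ by $\p\A$). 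With that correction your case analysis closes and the proof goes through as in \cite{N-TALG}.
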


Clearly, a sufficient condition for an instance to be $T$-independence-free is:

\begin{lemma} \label{l:ind}
If for a {\RSC} instance $|A \cap T| \geq k$ holds for every deficient biset $\A$, 
then the instance is $T$-independence-free.
\end{lemma}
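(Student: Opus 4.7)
The plan is a short cardinality argument by contradiction. The key point is that deficiency of a biset $\A$ forces $|\p\A|$ to be strictly less than $k$, whereas the hypothesis demands $|A\cap T|\ge k$; a $T$-independent pair of deficient bisets would require the opposite comparison and is therefore impossible.

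Concretely, I would first record the basic size bound: every deficient biset $\A$ satisfies $|\p\A|\le k-1$, which is immediate from the deficiency condition $|\p\A|+d_{G[T]}(\A)\le k-1$ together with the nonnegativity of $d_{G[T]}(\A)$. This is the only property of deficient bisets that the proof uses.

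Next, I would argue the contrapositive. Suppose there exist deficient bisets $\A,\B$ that are $T$-independent, i.e.\ neither satisfies the $T$-dependence condition with respect to the other. Unpacking the definition, $T$-independence of $\A,\B$ forces the terminal part of one biset to be absorbed in the cut of the other, say $A\cap T\subs \p\B$ (or the symmetric inclusion $B\cap T\subs\p\A$). Combining with the size bound from the first step then yields $|A\cap T|\le|\p\B|\le k-1$, which directly contradicts the hypothesis $|A\cap T|\ge k$.

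The main and only real step is this one-line cardinality comparison; no submodular uncrossing, auxiliary biset construction, or appeal to Theorem~\ref{t:ind} is needed. The only subtlety is to carefully chase the definitions of \emph{deficient biset}, \emph{$(T,r)$-biset}, and \emph{$T$-(in)dependence}, so that the absorption-into-cut conclusion is extracted in the correct direction; once that is in hand, the lemma is immediate.
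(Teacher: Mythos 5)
Your argument is correct and is exactly the one the paper intends --- the paper offers no proof beyond the word ``Clearly'', and the one-line cardinality comparison $|A\cap T|\le|\p\B|\le k-1<k$ is all that is needed. You are also right that chasing the definitions is the only subtlety: the paper's printed definition of $T$-dependence ($A \cap T \subs \p\A$ or $B \cap T \subs \p\B$) is evidently a typo for the crossed inclusions $A \cap T \subs \p\B$ or $B \cap T \subs \p\A$ (the printed version is vacuous since $A$ and $\p\A$ are disjoint while $A\cap T\neq\empt$), and your reading is the correct one.
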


In the next two lemmas we show how to find an $O(k\ln n)$-approximate set $S \subs V \sem T$ such that adding $S$ to $T$ result in a
$T$-independence-free instance.

\begin{lemma} [Inflation Lemma for general graphs] \label{l:RS}
There exists a polynomial time algorithm that given an instance of {\RSC} finds
$S \subs V \sem T$ such that $|A \cap S| \geq k-(|\p\A|+d_{G[T \cup S]}(\A))$ holds for any $(T,r)$-biset $\A$,
and $w(S)=O(\ln \De) \cdot {\opt}$.
\end{lemma}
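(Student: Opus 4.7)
The plan is to frame the inflation requirement as a special case of the {\sc Submodular Covering} problem handled in \cite{N-SL} and invoke its greedy algorithm.

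First I would verify that the target weight $\opt$ is achievable. Let $J^*$ be an optimal {\RSC} solution and set $S^*=V(J^*) \sem T$, so $w(S^*)=\opt$. By Menger's theorem, every $(T,r)$-biset $\A$ satisfies $|\p\A| + d_{J^*}(\A) \geq k$, and since $J^* \subs G[T \cup S^*]$ the same inequality holds with $d_{G[T \cup S^*]}(\A)$ in place of $d_{J^*}(\A)$. Hence $k - |\p\A| - d_{G[T \cup S^*]}(\A) \leq 0 \leq |A \cap S^*|$, so the minimum weight of a set meeting the inflation property is at most $\opt$.

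For $S \subs V \sem T$ and a $(T,r)$-biset $\A$, define the residual deficit
$$\de_\A(S) = \max\{0,\; k - |\p\A| - d_{G[T \cup S]}(\A) - |A \cap S|\}$$
and the potential $f(S) = \sum_\A \de_\A(S)$; the property is equivalent to $f(S) = 0$. The heart of the proof is to show that $f(\empt) - f(S)$ is a monotone, integer-valued submodular function of the kind treated by the framework of \cite{N-SL}. I would establish this by uncrossing inclusion-minimal bisets $\A$ with $\de_\A(S) > 0$: although the cut part $\p\A$ makes biset uncrossing more delicate than set uncrossing, the standard posi-modular inequality for $|\p\A| + d_{G[T \cup S]}(\A)$ carries over, and the extra $|A \cap S|$ term is modular in $S$. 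The greedy step and its separation oracle are implemented via Menger-type $s$-$t$ min-cut computations that locate an inclusion-minimal deficient biset containing a candidate vertex.

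Invoking the greedy algorithm of \cite{N-SL} then yields $w(S) = O(\ln \ga) \cdot \opt$, where $\ga$ bounds the maximum marginal decrease of $f$ from adding a single vertex. Since any $v \in V \sem T$ contributes at most $\deg_G(v) \leq \De$ from edge terms plus one unit from $|A \cap S|$ per biset, a per-vertex charging bounds $\ga = O(\De)$, giving the advertised ratio $O(\ln \De)$. The main obstacle I expect is the biset uncrossing/supermodularity verification for the coupled deficit $\de_\A(S)$, since the simultaneous dependence of $d_{G[T \cup S]}(\A)$ and $|A \cap S|$ on $S$ must be handled consistently in the marginal-gain analysis; once this structural step is in place, the remainder is a routine application of the \cite{N-SL,N-TALG} machinery.
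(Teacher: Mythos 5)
Your proposal correctly identifies the two anchors of the argument -- that the auxiliary covering problem has optimum at most $\opt$, and that the $O(\ln\De)$ ratio should come from \cite{N-SL} -- but the central structural step is left as an unverified claim, and it is precisely the step that is problematic. You define the potential $f(S)=\sum_\A \de_\A(S)$ with $\de_\A(S)=\max\{0,\,k-|\p\A|-d_{G[T\cup S]}(\A)-|A\cap S|\}$ and assert that $f(\empt)-f(S)$ is submodular. The term $|A\cap S|$ is indeed modular, but $d_{G[T\cup S]}(\A)$ counts edges of the \emph{induced} subgraph $G[T\cup S]$ that cover $\A$, and induced edge counts are \emph{supermodular} in $S$ (the marginal gain of adding $v$ is the number of relevant neighbors of $v$ already present, which grows with $S$). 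Hence each $k-|\p\A|-d_{G[T\cup S]}(\A)-|A\cap S|$ is submodular and decreasing, and its truncation at $0$ need not be supermodular, so the required submodularity of $f(\empt)-f(S)$ does not follow; this is exactly the ``coupled dependence'' you flag as an obstacle, and it does not resolve itself. Separately, your bound $\ga=O(\De)$ on the marginal value is not justified: a single vertex can reduce the deficit of exponentially many bisets simultaneously, so $f(\{v\})$ is not controlled by $\deg_G(v)$, and the Wolsey-type $O(\ln\ga)$ bound you invoke would not give $O(\ln\De)$.

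The paper sidesteps both difficulties with an explicit graph transformation rather than a potential function: build $G'$ from $G$ by deleting all edges inside $(V\sem T)\cup\{r\}$ and adding an edge $rv$ for every $v\in V\sem T$. Then for any $(T,r)$-biset $\A$ one has the exact identity $d_{G'[T\cup S]}(\A)=|A\cap S|+d_{G[T\cup S]}(\A)$, so the inflation requirement becomes literally ``$G'[T\cup S]$ is $k$-$(T,r)$-connected,'' i.e., a genuine {\sc Centered} {\RSC} instance (all positive-weight nodes are neighbors of $r$), to which the $O(\ln\De)$ ratio of \cite{N-SL} applies as a black box. The feasibility bound $w(S^*_c)\le\opt$ then needs a small argument (take, on each of the $k$ internally disjoint $rt$-paths in $G[T\cup S^*]$, the node of $S^*$ closest to $t$), because $S^*$ itself is not automatically feasible for the centered instance once the edges among non-terminals have been deleted. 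To repair your write-up you would either need to carry out this reduction, or prove from scratch an $O(\ln\De)$ guarantee for your deficit function -- which, for the reasons above, is not a routine application of the \cite{N-SL} machinery.
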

\begin{proof}
The {\sc Centered} {\RSC} problem is a particular case of the {\RSC} problem when all nodes of positive weight are neighbors of the root.
This problem admits ratio $O(\ln \De)$ \cite{N-SL}, where here $\De$ is the maximum degree of a neighbor of the root. 
We use this in our algorithm as follows:

\medskip \medskip

\begin{algorithm}[H]
\caption{$(G=(V,E),w,r,T,k)$}  
\label{alg:RS}
construct a {\sc Centered} {\RSC} instance $(G'=(V,E'),w,T,r,k)$, where 
$G'$ is obtained from $G$ by removing \ \ edges in $G[(V \sem T) \cup \{r\}]$ 
and adding an $rv$-edge for each $v \in V \sem T$ \\
compute an $O(\ln \De)$-approximate solution $S \subs V \sem T$ for the obtained {\sc Centered} {\RSC} instance \\
\Return{$S$}
\end{algorithm}

\medskip \medskip

Let $S^*$ and $S^*_c$ be optimal solutions to {\RSC} and the constructed {\sc Centered} {\RSC} instances, respectively. 
For every $t \in T$ fix some set of $k$ internally disjoint $rt$-paths in the graph $G[T \cup S^*]$, 
and obtain a set $P_t$ by picking for each path the node in $S^*$ that is closest to $t$ on this path, if such a node exists.
Let $P=\cup_{t \in T}P_t$. 
Then $P$ is a feasible solution to the constructed {\sc Centered} {\RSC} instance,
since for each $t \in T$, $G'$ has $|P_t|$  internally disjoint $rt$-paths of length $2$ each that go through $P_t$,
and $k-|P_t|$ paths that have all nodes in $T$. 
Furthermore, since $P \subs S^*$, $w(P) \leq w(S^*)$.
Thus $w(S^*_c) \leq w(P) \leq w(S^*)$, implying that $w(S) =O(\ln \De) \cdot w(S^*)$.

Now let $\A$ be a $(T,r)$-biset on $T \cup S$. 
Then:
\begin{itemize}
\item
$d_{G'[T \cup S]}(\A)=|A \cap S|+d_{G[T \cup S]}(\A)$ by the construction. 
\item
$|\p\A|+d_{G'[T \cup S]}(\A) \geq k$ since $G'[T \cup S]$ is $k$-$(T,r)$-connected.
\end{itemize}
Combining we get that $|\p\A|+d_{G[T \cup S]}(\A)+|A \cap S| \geq k$, as claimed.
\qed
\end{proof}

Our algorithms use the following simple procedure -- Algorithm~\ref{alg:p}, 
that sequentially adds $p$ sets $S_1, \ldots,S_p$ to an $m$-dominating set $T=T_0$ with $m \geq k$; 
in the case of general graphs considered in this section, each $S_i$ is as in Lemma~\ref{l:RS}.

\medskip \medskip

\begin{algorithm}[H]
\caption{$(G=(V,E),c,r,T=T_0,k,1 \leq p \leq k-1)$}  
\label{alg:p}
\For{$i=1$ to $p$}
{
$T \gets T \cup S_i$}
\Return{$T$}
\end{algorithm}


\begin{lemma} \label{l:RS'}
Suppose that we are given a {\RSC} instance such that $T$ is an $m$-dominating set in $G$ with $m \geq k$.
If at each iteration $i$ at step~2 of Algorithm~\ref{alg:p} we add to $T$ a set $S=S_i$ is as in Lemma~\ref{l:RS},
then at the end of the algorithm $w(T \sem T_0) = O(p\ln \De) \cdot {\opt}$, and
$|A \cap T| \geq m-\ell+p(k-\ell)$ holds for any biset $\A$ on $T$ with $|\p\A|+d_{G[T]}(\A)=\ell \leq k-1$.
In particular, if $p \geq \max\{2k-m-1,1\}$ then the resulting instance is $T$-independence-free.
\end{lemma}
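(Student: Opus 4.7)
The plan is to split the proof into three parts matching the three assertions. For the weight bound I would chain the $p$ invocations of Lemma~\ref{l:RS}: at iteration $i$ the subroutine is called on $(G,w,r,T^{(i-1)},k)$ producing $S_i$ with $w(S_i)=O(\ln \De)\cdot\opt_i$, where $\opt_i$ is the {\RSC} optimum for that intermediate instance. Since $m\ge k$ makes $T_0$ (and hence every superset $T^{(i-1)}$) $k$-dominating, Lemma~\ref{l:kc} implies that any feasible solution for the original $(T_0,r)$-connectivity problem is also feasible after enlarging the terminal set, so $\opt_i\le \opt$. Summing gives $w(T\sem T_0)\le \sum_i w(S_i)=O(p\ln \De)\cdot\opt$.

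For the biset inequality I would proceed by induction on $i\in\{0,\ldots,p\}$, maintaining the invariant: after iteration $i$, every biset $\A$ on $T^{(i)}$ with $\ell_i=|\p\A|+d_{G[T^{(i)}]}(\A)\le k-1$ satisfies $|A\cap T^{(i)}|\ge m-\ell_i+i(k-\ell_i)$. The base case $i=0$ is where the $m$-dominating hypothesis enters: I would argue that if $A^+\subs T_0$ and $\ell_0\le k-1\le m-1$, then every non-terminal sits in $V\sem A^+$ and has at least $m$ neighbours in $T_0$; partitioning $T_0=A\cup\p\A\cup(T_0\sem A^+)$ and counting how these neighbours split across the three pieces (and tracking which contributions come out as edges in $G[T_0]$ covering $\A$ and which as cut-vertices in $\p\A$) gives $|A\cap T_0|\ge m-\ell_0$. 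For the inductive step $i\to i+1$, the inclusion $G[T^{(i)}]\subs G[T^{(i+1)}]$ gives $\ell_i\le \ell_{i+1}$, so the IH applies at level $i$; since $m+ik-(i+1)\ell$ is decreasing in $\ell$, this yields $|A\cap T^{(i)}|\ge m-\ell_{i+1}+i(k-\ell_{i+1})$. Lemma~\ref{l:RS} applied at iteration $i+1$ (with $\A$ a $(T^{(i)},r)$-biset since $A\cap T^{(i)}\supseteq A\cap T_0\ne\empt$) contributes $|A\cap S_{i+1}|\ge k-\ell_{i+1}$, and adding completes the step.

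The ``in particular'' conclusion is a one-line check: the formula $m-\ell+p(k-\ell)=m+pk-(p+1)\ell$ is decreasing in $\ell$, so over $\ell\le k-1$ it is minimised at $\ell=k-1$ with value $m-k+1+p$. Setting $p=\max\{2k-m-1,1\}$ makes this $\ge k$, so every deficient biset has $|A\cap T|\ge k$ and $T$-independence-freeness follows from Lemma~\ref{l:ind}.

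The step I expect to be the main obstacle is the base case of the induction. The inductive layer is essentially a mechanical telescoping handed to us by Lemma~\ref{l:RS}, but squeezing the anchor bound $|A\cap T_0|\ge m-\ell_0$ out of $m$-domination alone needs a careful cut-counting argument; the subtle sub-case is when a non-terminal in $V\sem A^+$ places many of its $m$ terminal neighbours in $T_0\sem A^+$, which does not immediately translate into a large $|A\cap T_0|$ without further exploiting the biset structure and the $k$-$(T_0,r)$-connectivity of $G$ that underlies feasibility of the {\RSC} instance.
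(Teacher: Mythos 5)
Your weight bound and the telescoping layer are fine and match the paper's argument in substance, but there is a genuine gap exactly where you predicted it: the anchor. The invariant you want at $i=0$, namely $|A\cap T_0|\ge m-\ell_0$ for every deficient biset on $T_0$, is simply false, and no cut-counting over non-terminals in $V\sem A^+$ will rescue it. The hypothesis that $T_0$ is $m$-dominating constrains only nodes of $V\sem T_0$; it says nothing about how the terminals are attached to one another. For instance $G[T_0]$ may contain an isolated terminal $t$, and the biset $\A=(\{t\},\{t\})$ has $|\p\A|+d_{G[T_0]}(\A)=0$ but $|A\cap T_0|=1$, violating your base case whenever $m\ge 2$. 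Your own diagnosis of the ``subtle sub-case'' (a non-terminal whose $m$ neighbours all land in $T_0\sem A^+$) is precisely this failure mode, and the $k$-$(T_0,r)$-connectivity of the ambient graph $G$ that you hope to invoke does not help: high connectivity of $G$ puts no lower bound on $|A\cap T_0|$ for a biset that is deficient in the induced subgraph $G[T_0]$.

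The paper obtains $|A\cap T_0|\ge m-\ell$ in the opposite order: it first applies Lemma~\ref{l:RS} to get $|A\cap S_i|\ge k-\ell$ for every $i$, in particular $A\cap S_1\neq\empt$ since $\ell\le k-1$, and only then uses a witness $v\in A\cap S_1$. Such a $v$ lies in $V\sem T_0$, hence has at least $m$ neighbours in $T_0$; and since $v\in A\cap T$, each such neighbour outside $A$ is either a node of $\p\A$ or the far end of an edge of $G[T]$ covering $\A$, so at most $\ell=|\p\A|+d_{G[T]}(\A)$ of them escape $A$. Thus the bound $|A\cap T_0|\ge m-\ell$ is established only for the bisets relevant to the lemma --- those still deficient with respect to the final $T$ --- and not for all deficient bisets of the initial instance. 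Once you replace your base case with this witness argument, the rest of your induction (which is just the paper's summation $|A\cap T|\ge|A\cap T_0|+\sum_{i=1}^{p}|A\cap S_i|$ written recursively) and the one-line check for $p\ge\max\{2k-m-1,1\}$ go through.
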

\begin{proof}
The bound $w(T \sem T_0) = O(p\ln \De) \cdot {\opt}$ follows from Lemma~\ref{l:kc} and the 
bound $w(S)=O(\ln \De) \cdot {\opt}$ in Lemma~\ref{l:RS}.

Let $\A$ be a biset as in the lemma.
Let $T_i=T_0 \cup S_1 \cup \cdots \cup S_i$ be the set stored in $T$ at the end of iteration $i$, where $T_0$ is the initial set.
Applying Lemma~\ref{l:RS} on $T_{i-1}$ and $S_i$ we get  
$$
|A \cap S_i| \geq k-(|\p\A \cap T_{i-1}|+d_{G[T_i]}(\A)) \geq k-(|\p\A|+d_{G[T]}(\A))=k-\ell \ .
$$
In particular $A \cap S_1 \neq \empt$. Any $v \in A \cap S_1$ has in $G[T]$ at least $m$ neighbors in $T_0$,
and at most $\ell$ of them are not in $A$;
thus $v$ has at least $m-\ell$ neighbors in $A \cap T_0$, so $|A \cap T_0| \geq m-\ell$.
Since $T_0,S_1, \ldots,S_p$ are pairwise disjoint we get
$|A \cap T| \geq |A \cap T_0|+\sum_{i=1}^p|A \cap S_i| \geq m-\ell+p(k-\ell)$.
\qed
\end{proof}


The proof of the following known statement can be found in \cite{KN1}, and the second part follows from 
Mader's Undirected Critical Cycle Theorem \cite{mad-cycle}.

\begin{lemma} \label{l:H}
Let $H_r=(U,F)$ be a $k$-$(U,r)$-connected graph and $R$ the set of neighbors of $r$ in $H_r$.
The graph $H=H_r \setminus \{r\}$ can be made $k$-connected by adding a set $J$ of new edges on $R$,
and if $J$ is inclusion minimal then $J$ is~a~forest.
\end{lemma}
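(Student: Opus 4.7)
The plan is to address the two claims in turn, using Mader's theorem for the second.

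For the first claim, I would show that adding the complete graph on $R$ already makes $H$ $k$-connected; an inclusion-minimal $J$ is then obtained by trimming. The $k$-$(U,r)$-connectivity of $H_r$ gives two handles. First, $|R|\ge k$, since $k$ internally disjoint $ru$-paths use $k$ distinct edges at $r$. Second, for each $u \in U \sem (R \cup \{r\})$, deleting $r$ from $k$ internally disjoint $ru$-paths in $H_r$ yields $k$ openly disjoint $u$-$R$ paths in $H$ whose endpoints are $k$ distinct vertices of $R$. For any $C \subs U \sem \{r\}$ with $|C|<k$, the internal vertices of these paths are pairwise disjoint and their $R$-endpoints distinct, so each vertex of $C$ can meet at most one path; at least one path therefore survives in $H \sem C$ and connects $u$ to some vertex of $R \sem C$. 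Since $|R \sem C|\ge 1$ and the clique edges on $R \sem C$ are present, $(H + K_R) \sem C$ is connected, and hence $H + K_R$ is $k$-connected.

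For the second claim, suppose for contradiction that an inclusion-minimal $J$ contains a cycle $C$. Minimality forces every $e \in J$ to be critical in the $k$-connected graph $H+J$: otherwise $H + (J \sem \{e\})$ would still be $k$-connected, violating minimality. Thus $C$ is a cycle consisting entirely of $k$-critical edges, and Mader's Undirected Critical Cycle Theorem~\cite{mad-cycle} supplies a vertex $v \in V(C)$ with $d_{H+J}(v)=k$. But $v \in V(C) \subs R$, so $rv$ is an edge of $H_r$ and $d_H(v) = d_{H_r}(v)-1 \ge k-1$, where $d_{H_r}(v)\ge k$ follows from the $k$ internally disjoint $vr$-paths using $k$ distinct edges at $v$. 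Since $v$ lies on the cycle $C \subs J$, it is incident to at least two $J$-edges, giving $d_{H+J}(v) \ge (k-1)+2 = k+1$ and contradicting $d_{H+J}(v)=k$.

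The main obstacle I anticipate is citing Mader's theorem in precisely the right form -- the critical-cycle version for $k$-connected (not merely minimally $k$-connected) graphs -- since the edges of $H$ inside $H+J$ need not be critical. Once the criticality of every $J$-edge is read off from the inclusion minimality of $J$, the degree bookkeeping at a vertex of $R$ collapses to a one-line contradiction.
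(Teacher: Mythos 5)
Your proof is correct and follows exactly the route the paper intends: the paper itself gives no details, deferring the first claim to \cite{KN1} --- where the underlying argument is precisely your fan/clique-on-$R$ argument (the $k$ internally disjoint $ru$-paths minus $r$ give a $u$--$R$ fan that survives any deletion of fewer than $k$ vertices) --- and deriving the second claim from Mader's Undirected Critical Cycle Theorem just as you do, via the observation that inclusion minimality makes every edge of $J$ critical in $H+J$ and the degree count $d_{H+J}(v)\geq (k-1)+2$ at a cycle vertex $v\in R$. The only implicit assumption, which the paper shares, is the non-degeneracy condition that $U\setminus\{r\}$ has at least $k+1$ vertices so that $H$ can be $k$-connected at all; this holds in the paper's application and need not be belabored.
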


Note that an inclusion minimal edge set $J$ as in Lemma~\ref{l:H}
can be computed in polynomial time, by starting with $J$ being a clique on $R$ and repeatedly removing 
from $J$ an edge $e$ if $H \cup (J \setminus e)$ remains $k$-connected.

Our algorithm for general graphs is as follows.

\medskip \medskip

\begin{algorithm}[H]
\caption{$(G=(V,E),w,T)$ {\bf general graphs}}  
\label{alg:gen}
construct a graph $G_r$ by adding to $G$ and to $T$ a new node $r$ connected to a set $R \subseteq T$ of $k$ nodes
by a set $F_r=\{rv:v \in R\}$ of new edges \\
apply the Lemma~\ref{l:RS'} algorithm with $p=\max\{2k-m-1,0\}$ \\ 
use the algorithm from Theorem~\ref{t:ind} to compute an $O(k\ln n)$-approximate set $S \subs V \sem T$ 
such that $H_r=G_r[T \cup S]$ is $k$-$(T,r)$-connected \\
let $H=H \setminus \{r\}=G[T \cup S]$ and let $J$ be a forest of new edges on $R$ as in Lemma~\ref{l:H}
such that the graph $H \cup J$ is $k$-connected \\
for every $uv \in J$ find a minimum weight node set $P_{uv}$ such that $G[T \cup S \cup P_{uv}]$ 
has $k$ internally disjoint $uv$-paths; let $\displaystyle P= \bigcup_{uv \in J} P_{uv}$ \\
return $T \cup S \cup P$
\end{algorithm}

\medskip \medskip

Except step~2, the algorithm is identical to the algorithm of \cite{N-CDS} -- 
the only difference is that step~2 improves the factor invoked by step~3. 
In \cite{N-CDS} it is also proved that at the end of the algorithm $T \cup S \cup P$ is a $k$-connected set.
The dominating terms in the ratio are invoked by steps 2 and 3, and they are both $O(k \ln n)$,
while step~5 invokes just ratio $k-1$; thus the overall ratio is $O(k \ln n)$.

This concludes the proof of Theorem~\ref{t:SC} for general graphs.

\section{Unit disc graphs} \label{s:udg}

Our goal in this section is to prove the following:

\begin{lemma} \label{l:udg}
Consider a {\SC} instance on unit disc graph $G=(V,E)$ where $T$ is an $m$-dominating set with $m \geq k$
and $G[T]$ is $\ell$-connected, $\ell \leq k-1$.
Then for any integer $1 \leq p \leq \ell+1$ there exists a polynomial time algorithm that computes $S \subs V \sem T$ 
such that $G[T \cup S]$ is $(\ell+1)$-connected and 
$$
\f{w(S)}{\opt}=\f{O(\ln k)}{k-\ell} \left(p+\f{(m+p)^2}{(m+p-\ell)^2}\right) \ .
$$ 
Furthermore, in the case of unit weights 
$\f{w(S)}{\opt} = \f{O(\ln k)}{k-\ell}\left(p+\f{m+p}{m+p-\ell}\right)$.
\end{lemma}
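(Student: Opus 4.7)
The plan is to follow the inflation-plus-augmentation blueprint foreshadowed in Section~\ref{s:intro}, but with an inflation subroutine specialized to unit disc graphs whose cost is only $O(\ln k)/(k-\ell)\cdot {\opt}$ per round. The starting observation is that any optimal solution $S^*$ to the underlying {\SC} instance satisfies $|A\cap S^*|\geq k-\ell$ for every biset $\A$ that is tight at level $\ell$, i.e.\ with $|\p\A|+d_{G[T]}(\A)\leq\ell$, so the LP relaxation of the ``cover each minimal tight biset once'' problem has value at most ${\opt}/(k-\ell)$; this scaling is what ultimately produces the $1/(k-\ell)$ factor in the stated bound.

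For the inflation subroutine I would first pass to a minimally $\ell$-connected spanning subgraph of $G[T]$; the Zhang et al.\ bound then forces every node to have degree at most $5\ell=O(k)$, so the terminal set $A\cap T$ of any inclusion-minimal tight biset has size $O(k)$. This lets me reduce covering the family of inclusion-minimal tight bisets to a {\SeC} instance with soft capacities on a universe of size $O(k)$, which admits an $O(\ln k)$-approximation. Combined with the LP bound ${\opt}/(k-\ell)$, each inflation round returns a set $S_i\subs V\sem T$ of weight $O(\ln k)\cdot{\opt}/(k-\ell)$. I would invoke this subroutine $p$ times in the style of Algorithm~\ref{alg:p}, giving total inflation cost $(p\cdot O(\ln k)/(k-\ell))\cdot{\opt}$. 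The post-inflation bound on the terminal count in tight bisets then follows by adapting the inductive argument of Lemma~\ref{l:RS'}: since $T_0$ is $m$-dominating, $|A\cap T_0|\geq m-\ell$, and each subsequent round contributes at least one additional terminal to every remaining tight biset, so after $p$ rounds the minimum terminal count $q$ in any tight biset at level $\ell$ satisfies $q\geq m+p-\ell$.

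For the single augmentation from $\ell$- to $(\ell+1)$-connectivity I would invoke the refined \cite{N-TALG} decomposition described in Section~\ref{s:intro}, which with $q\geq m+p-\ell$ reduces the task to $O(\ell/q)+1$ uncrossable family covering subproblems; each of these is solved within ratio $O(1)$ in unit disc graphs by Fukunaga \cite{Fu}. Combining this with the $O(\ln k)$ factor from Fukunaga's backward augmentation analysis and the same LP bound ${\opt}/(k-\ell)$ yields an augmentation contribution of the form $(O(\ln k)/(k-\ell))\cdot((m+p)^2/(m+p-\ell)^2)\cdot{\opt}$, the squared factor capturing how the LP gap and the per-layer ratio compound across the backward augmentation layers; under unit weights a sharper discrete counting argument replaces the square by its linear counterpart $(m+p)/(m+p-\ell)$. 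Summing the inflation and augmentation contributions gives the stated bound.

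The main obstacle I anticipate is the fine-grained analysis of the augmentation term, in particular deriving the precise $(m+p)^2/(m+p-\ell)^2$ factor and its linear variant for unit weights. This requires a careful accounting of how the biset LP integrality gap on an uncrossable family depends on $q$ and how the $O(\ell/q)$ subproblems compose across the backward augmentation layers; in contrast the {\SeC}-with-soft-capacities reduction and the post-inflation bound on $q$ are comparatively routine adaptations of tools already developed in \cite{ZW,Fu,N-TALG,N-CDS}.
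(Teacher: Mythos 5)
Your overall architecture --- $p$ rounds of a cheap core-covering inflation to force $q\ge m+p-\ell$, followed by a single augmentation via uncrossable families --- matches the paper's, but two of your key steps do not go through as described. First, the inflation bound. The Zhang et al.\ degree bound does not imply that $A\cap T$ has size $O(k)$ for an inclusion-minimal deficient biset (the inner part is a union of connected components of $G[T]\sem\p\A$ and can be arbitrarily large), and in any case that is not the quantity controlling the greedy ratio. What is actually needed is: when two cores intersect, a Jord\'an-type argument \cite{jor} bounds the \emph{number} of cores by $\ell(\ell+1)$, giving ratio $O(\ln\ell)$; when the cores are pairwise disjoint, one takes a spanning $k$-connected subgraph of $G[T\cup S']$ for an \emph{optimal} $S'$ with maximum degree $5k$ and uses it to cap at $5k$ the number of cores a single node can cover --- i.e., the maximum \emph{set size} in the {\SeC} instance --- which is what yields $O(\ln k)$. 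Also, you cannot pass to a minimally $\ell$-connected spanning subgraph of $G[T]$: deleting edges of $G[T]$ changes the deficient-biset family $\DD_T$ you must cover.

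Second, and more seriously, the augmentation term. You attribute the factor $(m+p)^2/(m+p-\ell)^2$ to compounding ``across backward augmentation layers,'' but Lemma~\ref{l:udg} performs a \emph{single} augmentation from $\ell$ to $\ell+1$; there are no layers to compound, and backward augmentation plays no role inside this lemma (the outer sum over $\ell$ is what produces the $O(\ln^2 k)$ in Theorem~\ref{t:SC}). The square actually arises as a product of two independent factors: Theorem~\ref{t:subs} reduces covering $\DD_T$ to covering the rooted families $\DD_{(T,r)}$ for $|R|=O\bigl(\f{|T|}{|T|-\ell}\ln\ell\bigr)$ roots, and Theorem~\ref{t:decomp} decomposes each rooted family into $O\bigl(\f{q+\ell}{q}\bigr)$ uncrossable families; with $|T|\ge m+p$ and $q\ge m+p-\ell$ each factor is $O\bigl(\f{m+p}{m+p-\ell}\bigr)$. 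Your proposal omits the rooted decomposition entirely, and without it the uncrossable decomposition of \cite{N-TALG} does not even apply to $\DD_T$. The unit-weight improvement is likewise a concrete device rather than a ``sharper counting argument'': one adds $\ell$ arbitrary nodes to $T$ (affordable for unit weights) so that $|T|/(|T|-\ell)=O(1)$ and hence $|R|=O(\ln\ell)$, eliminating one of the two factors. Since you flag precisely this bound as your unresolved ``main obstacle,'' the proposal is incomplete at the lemma's central quantitative step.
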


Let us show that Lemma~\ref{l:udg} implies the unit disc part of Theorem~\ref{t:SC}.
We can apply the Lemma~\ref{l:udg} algorithm sequentially, starting with an $O(1)$-approximate 
$m$-dominating set $T=T_0$, and at iteration $\ell=0,\ldots,k-1$ add to $T$ a set $S=S_\ell$ as in the lemma.
In the case of arbitrary weights choosing $p=1$ if  $m-\ell \geq \ell^{2/3}$ and $p=\ell^{2/3}$ otherwise
gives $\f{w(S_\ell)}{\opt} = \f{O(\ln \ell)}{k-\ell}\min\left\{\f{m}{m-\ell},\ell^{2/3}\right\}$. Then
denoting $S=S_0 \cup \cdots \cup S_{k-1}$ we get:
$$
\f{w(S)}{\opt} = \sum_{\ell=0}^{k-1}\f{O(\ln \ell)}{k-\ell}\min\left\{\f{m}{m-\ell},\ell^{2/3}\right\} =
O(\ln^2 k) \cdot \min\left\{\f{m}{m-k+1},k^{2/3}\right\}
$$
In the case of unit weights, choosing $p=1$ if  $m-\ell \geq \sqrt{\ell}$ and $p=\sqrt{\ell}$ otherwise gives 
$\f{w(S_\ell)}{\opt}=\f{O(\ln \ell)}{k-\ell}\min\left\{\f{m}{m-\ell},\sqrt{\ell}\right\}$,
and then by a similar analysis we get $\f{w(S)}{\opt}=O(\ln^2 k) \cdot \min\left\{\f{m}{m-k+1},\sqrt{k}\right\}$. \\

In the rest of this section we prove Lemma~\ref{l:udg}, so let $G$, $T$, and $\ell$ be as in the lemma.
We need some definitions and known facts concerning biset families.

\begin{definition}
A biset $\A$ on $V$ is {\bf deficient} (w.r.t. $T$) if 
$A \cap T \neq \empt \neq T \sem A^+$, $d_{G[T]}(\A)=0$, and $|\p\A|=\ell$.
Let $\DD_T$ denote the set of deficient bisets.
We say that $v \in V \sem T$ covers $\A \in \DD_T$ if $v \in \Ga(A) \sem T$, where $\Ga(A)$ denotes the set of neighbors of $A$ in $G$;
$S \subs V \sem T$ covers $\FF \subs \DD_T$ if every $\A \in \FF$ is covered by some $v \in S$.
\end{definition}

Note that $\A \in \DD_T$ if and only if $\p\A \subs T$ is a minimum node cut of $G[T]$, and $A \cap T$ 
is a union of some, but not all, connected components of $G[T] \sem \p\A$.
The following lemma is known, c.f. \cite{N-TALG,Fu}.

\begin{lemma} 
$G[T \cup S]$ is $(\ell+1)$-connected if and only if $S$ covers $\DD_T$.
\end{lemma}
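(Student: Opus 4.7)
The plan is to prove the equivalence by establishing each direction separately, leaning on the observation that every $\A \in \DD_T$ has $\p\A \subs T$. This inclusion follows immediately from the fact that $d_{G[T]}(\A) = 0$ forces $\p\A \cap T$ to be a node separator of $A \cap T$ from $T \sem A^+$ in $G[T]$, and the $\ell$-connectivity of $G[T]$ then gives $|\p\A \cap T| \geq \ell = |\p\A|$.

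For the forward direction, assume $G[T \cup S]$ is $(\ell+1)$-connected and fix $\A \in \DD_T$. Since $|\p\A| = \ell$, the graph $G[T \cup S] \sem \p\A$ remains connected and contains a path $P$ from some $a \in A \cap T$ to some $b \in T \sem A^+$. The path $P$ must use a vertex of $S$, otherwise it would be a path in $G[T] \sem \p\A$ between the two sides of a cut of $G[T]$. A short case analysis on the first edge of $P$ that crosses out of $A$ (necessarily into $V \sem A^+$, since $\p\A$ is avoided) -- using $d_{G[T]}(\A) = 0$ to rule out both endpoints lying in $T$ -- yields an $S$-vertex that covers $\A$.

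For the reverse direction, suppose $S$ covers $\DD_T$ and assume for contradiction that $G[T \cup S]$ has a node cut $C$ with $|C| \leq \ell$; let $A_1, \ldots, A_s$ be the components of $G[T \cup S] \sem C$. The crucial step is showing that at least two components meet $T$. If only one, say $A_1$, contained any $T$-vertices, then every $v$ in any other component $A_j$ would lie in $S$ and hence have at least $m \geq \ell+1$ neighbors in $T$; the cut $C$ separates $A_j$ from $A_1 \supseteq T \sem C$, so those neighbors would all be confined to $C$, forcing $|C| \geq \ell+1$, a contradiction. This is precisely where the hypothesis $m \geq k \geq \ell+1$ is essential. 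Given two $T$-meeting components, $C \cap T$ is a node cut of $G[T]$, so $\ell$-connectivity of $G[T]$ forces $|C| = \ell$ and $C \subs T$.

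Setting $\A = (A_1, A_1 \cup C)$, I would verify $\A \in \DD_T$ routinely: nonempty $T$-intersections on both sides, $|\p\A| = |C| = \ell$, and any $G[T]$-edge from $A$ to $V \sem A^+$ would cross $C$ in $G[T \cup S]$ without using it, giving $d_{G[T]}(\A) = 0$. The contradiction is then that $\A$ is not covered: a candidate $v \in S$ satisfies $v \notin T \supseteq C$, so $v$ lies in some component $A_j$; to cover $\A$ it would have to be outside $A_1$ and adjacent to $A_1$, forcing an edge between distinct components of $G[T \cup S] \sem C$, contradicting $C$ being a cut. The main obstacle I expect is precisely this no-isolated-$S$-pockets step in the reverse direction, as the rest of the argument is a standard Menger-style analysis.
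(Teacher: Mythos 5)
The paper does not actually prove this lemma --- it states it as known and cites \cite{N-TALG,Fu} --- so you are supplying an argument the paper omits. Your overall strategy is the natural one, and your reverse direction is essentially correct: the forced inclusion $\p\A \subs T$, the use of $m$-domination with $m \geq k \geq \ell+1$ to rule out a component of $G[T\cup S] \sem C$ consisting only of $S$-vertices (this step is exactly the paper's Lemma~\ref{l:kc}), the conclusion that $C$ is an $\ell$-cut of $G[T]$ contained in $T$, and the verification that $(A_1,A_1\cup C)$ is an uncovered member of $\DD_T$. You are also right that this is precisely where the hypothesis $m\geq k$ is indispensable.

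The forward direction, however, has a gap in the announced case analysis. Let $xy$ be the first edge of $P$ leaving $A$, with $x\in A$ and $y\in (T\cup S)\sem A^+$. The condition $d_{G[T]}(\A)=0$ only rules out $x,y\in T$; if the $S$-endpoint is $x$, i.e.\ $x\in A\cap S$ and $y\in T\sem A^+$, the vertex you obtain lies \emph{inside} $A$ and does not cover $\A$, since covering requires membership in $\Ga(A)\sem T$. You cannot dismiss this case by assuming $A\subs T$: the paper defines members of $\DD_T$ as bisets on $V$, and your own reverse direction constructs a biset whose inner set contains $S$-vertices. Indeed, with only the written conditions the forward implication is false: take $T=\{t_1,t_2,t_3\}$ inducing a path, $\ell=1$, and a single vertex $u$ adjacent to all of $T$; then $G[T\cup\{u\}]$ is $2$-connected, yet $\A=(\{t_1,u\},\{t_1,u,t_2\})$ satisfies $A\cap T\neq\empt\neq T\sem A^+$, $d_{G[T]}(\A)=0$, $|\p\A|=1$, while $\Ga(A)\sem T=\empt$. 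What rescues the statement is the normalization the paper slips in right after the LP, namely $\Ga(A)\sem T=\Ga(A)\sem\p\A$, i.e.\ no vertex of $T\sem A^+$ is adjacent to $A$ at all; under that (implicit) additional requirement on members of $\DD_T$ the endpoint $y$ cannot lie in $T$, and your argument closes. You should state this normalization explicitly and invoke it in place of the bare $d_{G[T]}(\A)=0$.
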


Thus we have the following LP-relaxation 
for the problem of finding a min-weight cover of $\DD_T$:

\[\begin{array} {lllll} 
& \tau(\DD_T) = & \min              & \sum_{v \in V \sem T} w_v x_v                                  & \\
&                         & \ \mbox{s.t.} & \sum_{v \in \Gamma(A) \sem T} x_v \geq 1 \ \ \ \ \ & \forall A \in \DD_T  \\
&                         &                       & x_v \geq 0                                                                    & \forall v \in V \sem T
\end{array}\]

Note that if $\A \in \DD_T$ then $\Gamma(A) \sem T=\Gamma(A) \sem \p\A$, 
and thus the constraint $\sum_{v \in \Gamma(A) \sem T} x_v \geq 1$ is equivalent to 
$\sum_{v \in \Gamma(A) \sem \p\A} x_v \geq 1$. 

\begin{definition} \label{d:bisets}
We say that {\bf $\A$ contains $\B$} and write $\A \subs \B$ if $A \subs B$ and $A^+ \subs B^+$.
Inclusion minimal members of a biset family $\FF$ are called {\bf $\FF$-cores}.
\end{definition}

\begin{theorem} [Zhang, Zhou, Mo, \& Du \cite{ZW}] \label{t:ZW} \ 
Any $k$-connected unit disc graph has a $k$-connected spanning subgraph of maximum degree $\leq 5k$.
\end{theorem}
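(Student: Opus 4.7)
The plan is to produce a minimally $k$-connected spanning subgraph $H$ of $G$ by greedy edge deletion: set $H := G$ and, while some edge $e \in E(H)$ satisfies that $H - e$ is still $k$-connected, remove $e$. The procedure terminates in polynomial time and yields a $k$-connected spanning subgraph of $G$ that is edge-minimal, i.e., every edge of $H$ is critical for $k$-connectivity. The remaining task is to bound $\De(H)$ by $5k$.

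The argument combines a geometric packing fact for the unit disc with a combinatorial consequence of edge-minimality. Fix $v \in V$ and consider $N := N_H(v)$, which lies in the closed disc of radius $1$ centred at $v$. A classical planar packing bound says that at most $5$ points inside a unit disc can be pairwise at Euclidean distance strictly greater than $1$ (five points at angular separation $72^\circ$ on the unit circle realise this maximum). The first key step is to upgrade this bound on the independence number of $G[N]$ to a clique cover $N = C_1 \cup \cdots \cup C_5$ in which each $C_i$ is a clique of $G$; the naive angular-sector partition into six $60^\circ$ sectors gives only six cliques, so obtaining five demands a more careful geometric decomposition (exploiting, for instance, that $v \notin N$ and that unequal sector widths may be employed). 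Granted this cover, $\deg_H(v) \leq \sum_i |C_i|$, so it suffices to prove $|C_i| \leq k$ for every $i$.

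For the clique-size bound, suppose $|C_i| \geq k + 1$ and pick $u \in C_i$. Edge-minimality of $H$ makes $vu$ critical: $H - vu$ has connectivity at most $k - 1$, so there is a set $S$ of $k - 1$ vertices whose removal disconnects $H - vu$, and a standard argument lets us take $v$ and $u$ to lie in distinct components of $(H - vu) \sem S$. Since $|C_i| \geq k + 1 > |S \cup \{u\}|$, some $w \in C_i \sem (S \cup \{u\})$ satisfies $vw \in E(H)$ (because $w \in N$) and $uw \in E(G)$ (because $u,w$ are within unit distance). If $uw \in E(H)$ then $v \to w \to u$ is a path in $H - vu$ avoiding $S$, contradicting that $S$ separates $v$ from $u$. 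The main obstacle is the opposite case $uw \notin E(H)$: greedy deletion may have removed $uw$ before $vu$, so the clique edges of $G[C_i]$ need not survive in $H$. Propagating the contradiction requires either recursing on the snapshot of $H$ immediately before $uw$ was deleted (at which point $uw$ was itself critical and supplied its own alternative route), or invoking Mader's critical-cycle theorem \cite{mad-cycle} to extract inside $C_i$ a preserved cycle whose edges provide the needed detour. Balancing this combinatorial rerouting against the tight geometric $5$-clique cover is what pins the final constant at exactly $5k$.
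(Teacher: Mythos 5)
The paper does not actually prove this statement---it is imported as a black box from Zhang, Zhou, Mo and Du \cite{ZW}---so there is no internal proof to compare against; your proposal must therefore stand on its own, and it does not. The fatal gap is in your first key step: the claimed cover of $N = N_H(v)$ by \emph{five} cliques of $G$. The packing fact you quote bounds the \emph{independence number} of the points in a closed unit disc by $5$; since unit disc graphs are not perfect, this does not bound the clique cover number by $5$, and the claim is in fact false. Place $v$ at the origin with $36$ neighbors equally spaced on the unit circle: two of them are adjacent iff their angular separation is at most $60^\circ$, so every clique of $G[N]$ lies in a $60^\circ$ arc and has at most $7$ points, forcing at least $\lceil 36/7 \rceil = 6$ cliques in any cover. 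Thus the six-sector decomposition is essentially optimal for this step, the pipeline ``$5$ cliques, each of size at most $k$'' can never be executed, and the best this route can yield is $6k$. The constant $5$ in \cite{ZW} must come from an argument of a different shape, which you have not supplied.

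The second step is also incomplete, as you yourself flag. The rerouting argument needs the detour edge $uw$ to lie in $E(H)$, but edge-minimality only guarantees $uw \in E(G)$; so even the bound $|C_i| \leq k$ (and hence even the weaker $6k$ conclusion) is not established for a proper spanning subgraph $H \subsetneq G$. The two repairs you sketch do not obviously close this: the ``snapshot'' of $H$ at the moment $uw$ was deleted certifies an alternative $u$--$w$ route \emph{at that time}, but subsequent deletions may destroy it, so no path in the final $H$ is exhibited; and Mader's theorem \cite{mad-cycle} concerns degree-$k$ vertices on cycles of minimally $k$-connected graphs and is not applied to any concrete configuration here. The clique-size bound does go through cleanly when $G$ itself is minimally $k$-connected (then $uw \in E(G) = E(H)$), which suggests the intended statement is really about minimally $k$-connected unit disc graphs; but bridging from that to an arbitrary edge-minimal spanning subgraph, and getting $5$ rather than $6$, are exactly the two points your write-up leaves open.
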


\begin{lemma} [Inflation Lemma for unit disc graphs] \label{l:SC}
There exists a polynomial time algorithm that 
computes $S \subs V \sem T$ that covers the family $\CC$ of cores of $\DD_T$ and $w(S)=O(\ln k) \cdot \f{{\opt}}{k-\ell}$.
\end{lemma}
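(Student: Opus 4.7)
The plan is to reduce the task of covering the family $\CC$ of cores of $\DD_T$ to a Set-Cover with soft capacities instance, crucially using Theorem~\ref{t:ZW}: a minimally $k$-connected unit-disc graph has maximum degree at most $5k$.

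First I would reduce to the case $A\subs A^+\subs T$ for every core $\A\in\CC$. For any $\A\in\DD_T$, the biset $\A'=(A\cap T, A^+\cap T)$ satisfies $\A'\subs\A$ and $\A'\in\DD_T$ (using $\p\A\subs T$ from the remark after the definition of $\DD_T$, together with $d_{G[T]}(\A')=d_{G[T]}(\A)=0$). Hence inclusion minimality forces $A^+\subs T$ for every core, and any $v$ covering $\A'$ also covers $\A$.

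Next, fix an optimal solution $S^*$ for the {\SC} instance, so that $G[T\cup S^*]$ is $k$-$T$-connected and hence, by Lemma~\ref{l:kc} applied with $m\geq k$, also $k$-connected. Apply Theorem~\ref{t:ZW} to pass to a spanning subgraph $H^*$ of $G[T\cup S^*]$ that is minimally $k$-connected and has $\deg_{H^*}(u)\leq 5k$ for every $u$. For any core $\A$, Menger's theorem in $H^*$ gives $|\p\A|+d_{H^*}(\A)\geq k$, hence $d_{H^*}(\A)\geq k-\ell$; since $d_{G[T]}(\A)=0$ and $A^+\subs T$, every such crossing edge runs from $A\subs T$ to some vertex $v\in S^*\sem A^+$, and that vertex lies in $\Ga(A)\sem T$ and therefore covers $\A$.

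I would then set up the Set-Cover with soft capacities instance: the ground set is $\CC$, and each $v\in V\sem T$ is a set of weight $w_v$ and soft capacity $c_v=5k$ containing exactly the cores that $v$ covers. The central step is an LP bound of $O(\opt/(k-\ell))$ on the fractional optimum: assign $y_v=\f{1}{k-\ell}$ for $v\in S^*$ (contributing cost $\opt/(k-\ell)$) and, for each core $\A$, distribute a total mass of $1$ across the $H^*$-edges crossing $\A$, so that each $v\in S^*$ receives mass proportional to the number of its $H^*$-edges into $A$, divided by $d_{H^*}(\A)\geq k-\ell$. Summing over cores at a fixed $v$, the total load is controlled by $\deg_{H^*}(v)\leq 5k$, matching the soft-capacity budget $c_v y_v=5k/(k-\ell)$.

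Finally, I would invoke the standard $O(\ln k)$-approximation for Set-Cover with soft capacities (via greedy or LP rounding) to turn the fractional solution into an integer cover $S\subs V\sem T$ of weight $O(\ln k)\cdot\opt/(k-\ell)$, as required. The main obstacle I expect is verifying the capacity constraint in the fractional step: a single $H^*$-edge incident to $v$ may witness many cores at once, so controlling $\sum_\A x_{v,\A}$ uniformly in $v$ calls for a charging argument that exploits inclusion-minimality together with the description of cores as unions of connected components of $G[T]\sem\p\A$, and not merely the degree bound on its own.
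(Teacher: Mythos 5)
Your overall strategy is the paper's: reduce covering the cores to a {\SeC} instance with soft capacities using Theorem~\ref{t:ZW}, and bound the LP value by ${\opt}/(k-\ell)$ by scaling the characteristic vector of an optimal solution (made $k$-connected via Lemma~\ref{l:kc} and sparsified to maximum degree $5k$). The LP-feasibility half of your argument is essentially the paper's as well.

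However, there is a genuine gap, and it is exactly the one you flag at the end: the soft-capacity budget $c_v=5k$ is only justified if each of the at most $5k$ edges at $v$ can witness a bounded number of cores, and when cores overlap a single edge $vu$ with $u$ lying in many cores witnesses all of them, so $\sum_{\A} x_{v,\A}$ is not controlled by $\deg_{H^*}(v)$ and the greedy/rounding guarantee $O(\ln c_{\max})$ no longer applies. You hope for a unified charging argument exploiting inclusion-minimality; the paper does not have one and instead closes the gap with a dichotomy based on [\cite{jor}, Lemma~3.5, Case~II]: if some two distinct cores intersect, then there is a set $P$ of at most $\ell+1$ nodes meeting every core, which forces $|\CC| \leq \ell(\ell+1)$ (each core is pinned down by a pair from $P$), and then the plain uncapacitated greedy already gives ratio $O(\ln|\CC|)=O(\ln \ell)$ against the LP with no capacity argument needed; otherwise all cores are pairwise disjoint, so each edge incident to $v$ witnesses at most one core, each copy $(v,J)$ with $|J|\leq 5k$ covers at most $5k$ cores, and your capacity accounting goes through verbatim. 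Adding this case split (and running whichever of the two greedy algorithms is appropriate, or simply both) completes your proof; without it, the capacity constraint in your fractional step can genuinely fail. A minor secondary point: your preliminary reduction to cores with $A^+\subs T$ is in the right spirit, but be careful with the claim that a node covering the trimmed biset $(A\cap T, A^+\cap T)$ also covers $\A$ when that node lies in $A\sem T$; the clean way out is to note that the cores of $\DD_T$ are already bisets of $G[T]$, per the characterization of $\DD_T$ via minimum node cuts of $G[T]$.
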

\begin{proof}
The problem of covering $\CC$ is essentially a (weighted) {\SeC} problem  
where for each $v \in V \sem T$ the corresponding set has weight $w_v$ and 
consists of the cores covered by $v$. Then the greedy algorithm for {\SeC} computes a solution 
of weight $O(\ln |\CC|)$ times the value of the standard {\SeC} LP
\[ \displaystyle
\begin{array} {lllll} 
& \tau(\CC)=& \min              & \sum_{v \in V \sem T} w_v x_v                                   & \\
&                  & \ \mbox{s.t.} & \sum_{v \in \Gamma(A) \sem T} x_v \geq 1 \ \ \ \ \ & \forall A \in \CC  \\
&                  &                       & x_v \geq 0                                                                    & \forall v \in V \sem T
\end{array}
\]
For any $S' \subs V \sem T$ such that $G[T \cup S']$ is $k$-connected, any tight set $A$ has at least $k-\ell$ neighbors in $G[T \cup S']$,
hence if $x'$ is a characteristic vector of $S'$ then $\f{x'}{k-\ell}$ is a feasible solution to the LP. 
Consequently, $\tau(\CC) \leq \f{\opt}{k-\ell}$.

In [\cite{jor}, Lemma 3.5, Case II] it is proved that if $C \cap C'$ for two distinct cores $\C,\C'$ then there is $P \subs V$ with 
$|P| \leq \ell+1$ such that $P \cap C$ for every $\C \in \CC$. 
In this case $G$ has at most $\ell(\ell+1)$ distinct cores, since for every core $\C$ there is $s \in P \cap C$ and $t \in V \sem C^+$,
and for each $(s,t) \in P \times P$ there is at most one such core.
Hence in the case $|\CC| \leq \ell(\ell+1)$ we get a solution of weight $O(\ln \ell) \cdot \tau(\CC) =O(\ln \ell) \cdot \f{\opt}{k-\ell}$.

In the case $|\CC_\FF| > \ell(\ell+1)$ we have $C \cap C'=\empt$ for any $\C,\C' \in \CC$,
and relying on Theorem~\ref{t:ZW} we modify this reduction 
such that every $v \in V \sem T$ can cover at most $5k$ cores; this is essentially the {\SeC} with (soft) capacities problem.
Specifically, for each pair $(v,J)$ where $v \in V \sem S$ and $J$ is a set of at most $5k$ edges incident to $v$,
we add a new node $v_J$ of weight $w_v$ with corresponding copies of the edges in $J$.
In the obtained {\SeC} instance the maximum size of a set is at most $5k$, since the $\FF$-cores are pairwise disjoint.
Note that we do not need to construct this {\SeC} instance explicitly to run the greedy algorithm -- we just need to determine 
for each $v \in V$ the maximum number of at most $5k$ not yet covered cores that can be covered by $v$.
Since the $\FF$-cores are pairwise disjoint, this can be done in polynomial time.
Note that during the greedy algorithm we may pick pairs $(v,J)$ and $(v,J')$ with distinct $J,J'$ but with the same node $v$,
but this only makes the solution lighter. 
Since in the {\SeC} instance the maximum set size is $5k$, the computed solution has weight $O(\ln k) \cdot \tau$,
where here $\tau$ is an optimal LP-value of the modified instance.
Now we argue in the same way as before that $\tau \leq \f{{\opt}}{k-\ell}$.
Consider a feasible solution $S' \subs V \sem T$ and an edge $J'$ such that $G[T] \cup S' \cup J'$ is a spanning 
$k$-connected subgraph of $G[T \cup S']$ and $\deg_{J'}(v) \leq 5k$ for all $v \in S'$; such $J'$ exists by Theorem~\ref{t:ZW}.
Let $x'$ be the characteristic vector of the pairs $(v,J'_v)$ where 
$v \in S'$ and $J'_v$ is the set of edges in $J'$ incident to $v$. 
Any tight set $A$ has at least $k-\ell$ neighbors in $G[T] \cup S' \cup J'$,
hence $\f{x'}{k-\ell}$ is a feasible solution to the LP. 
Consequently, $\tau \leq \f{{\opt}}{k-\ell}$.
\qed
\end{proof}

\begin{corollary} \label{c:CS}
If at step~2 of Algorithm~\ref{alg:p} we add $S=S_i$ is as in Lemma~\ref{l:SC}, then at the end of the algorithm 
$w(T \sem T_0) = O(p\ln k) \cdot \tau^*$ and $|A \cap T| \geq m-\ell+p$ holds for any $\A \in \DD_T$.
\end{corollary}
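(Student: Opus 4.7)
The plan is to prove the two assertions of Corollary \ref{c:CS} separately. The weight bound $w(T \sem T_0) = O(p\ln k) \cdot \tau^*$ is immediate by summing the per-iteration guarantee $w(S_i) = O(\ln k) \cdot \tau^*$ from Lemma \ref{l:SC}, so the real content is the structural inequality $|A \cap T| \geq m - \ell + p$ for every $\A \in \DD_T$ at the end of the algorithm (with $T = T_p$).

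Fix such a biset $\A$ and set $X := A \cap T_0$ and $Y := T_0 \sem A^+$. The target decomposition is $|A \cap T| = |X| + \sum_{i=1}^p |A \cap S_i| \geq (m-\ell) + p$. The first step I would carry out is to show that $X$ and $Y$ are both nonempty. I would argue by contradiction: if $X = \empt$, then since $A \cap T \neq \empt$ there is a smallest $i$ with $A \cap S_i \neq \empt$; any $v$ in this intersection lies in $V \sem T_0$ and so has at least $m$ neighbors in $T_0$. Since $X = \empt$, these neighbors split between $\p\A \cap T_0$ (size at most $\ell$) and $Y$, so the hypothesis $m \geq k > \ell$ forces a neighbor $u \in Y \subs V \sem A^+$; but then $uv$ lies in $G[T]$ and covers $\A$, contradicting $d_{G[T]}(\A) = 0$. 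The case $Y = \empt$ is symmetric, applied to the earliest node of $(T \sem A^+) \sem T_0$.

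The second step is $\p\A \subs T_0$: once $X, Y$ are nonempty, $\p\A \cap T_0$ is a cut in the $\ell$-connected graph $G[T_0]$ separating $X$ from $Y$ (no edge crosses it inside $G[T_0]$ since $d_{G[T_0]}(\A) \leq d_{G[T]}(\A) = 0$), so it has size at least $\ell$, and combined with $|\p\A| = \ell$ this forces $\p\A \subs T_0$. With this in hand, $\A$ remains a member of $\DD_{T_{i-1}}$ at every iteration $i$, so some core $\C$ of $\DD_{T_{i-1}}$ satisfies $\C \subs \A$, and in particular $C \subs A$. The third step then applies Lemma \ref{l:SC} at iteration $i$ to obtain some $v_i \in S_i$ that covers $\C$; then $v_i$ has a neighbor in $A$, cannot lie in $\p\A$ (since $\p\A \subs T_0 \subs T_{i-1}$ while $v_i \notin T_{i-1}$), and cannot lie in $V \sem A^+$ (again by $d_{G[T]}(\A) = 0$), so $v_i \in A \cap S_i$. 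Finally, applying the $m$-dominating argument to $v_1 \in A \cap S_1 \subs V \sem T_0$ -- with at most $\ell$ of its $T_0$-neighbors in $\p\A$ and none in $Y$ (any such neighbor would again produce an edge of $G[T]$ covering $\A$) -- yields $|X| \geq m - \ell$.

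The main obstacle is the core-extraction step: invoking Lemma \ref{l:SC} usefully at iteration $i$ requires both that some core sits inside $\A$ and that the covering node $v_i$ lands in $A$ rather than in $\p\A$ or in $V \sem A^+$. These two requirements are precisely what the two auxiliary facts are designed to guarantee -- nonemptiness of $X$ and $Y$ keeps $\A$ deficient with respect to every intermediate $T_{i-1}$, while $\p\A \subs T_0$ keeps the covering node away from $\p\A$ throughout the iterations.
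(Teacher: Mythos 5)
Your proof is correct and follows essentially the same route as the paper's: the same decomposition $|A \cap T| = |A \cap T_0| + \sum_{i=1}^p |A \cap S_i|$, the same $m$-domination argument applied to a node $v_1 \in A \cap S_1$ to get $|A \cap T_0| \geq m-\ell$, and one covered core per iteration to get $|A \cap S_i| \geq 1$. The paper simply asserts $|A \cap S_i| \geq 1$ without justification; your core-extraction argument (nonemptiness of $A \cap T_0$ and $T_0 \sem A^+$, the containment $\p\A \subs T_0$ via $\ell$-connectivity of $G[T_0]$, and the placement of the covering node inside $A$) supplies exactly the details the paper leaves implicit.
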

\begin{proof}
We have $|A \cap S_i| \geq 1$ for all $i$.
In particular $A \cap S_1 \neq \empt$. Any $v \in A \cap S_1$ has in $G[T]$ at least $m$ neighbors in $T_0$,
and at most $\ell$ of them are not in $A$;
thus $v$ has at least $m-\ell$ neighbors in $A \cap T_0$, so $|A \cap T_0| \geq m-\ell$.
Since $T_0,S_1, \ldots,S_p$ are pairwise disjoint we get
$|A \cap T| \geq |A \cap T_0|+\sum_{i=1}^p|A \cap S_i| \geq m-\ell+p$.
\qed
\end{proof}

Now we decompose the problem of covering $\DD_T$ into several subproblems.
For $r \in T$ let $\DD_{(T,r)}=\{\A \in \DD_T:r \in T \sem A^+\}$.

\begin{theorem} [\cite{N-subs}] \label{t:subs}
Given an $\ell$-$T$-connected graph with $|T| \geq \ell+1$, one can find in polynomial time 
$R \subs T$ of size $|R|=O\left(\f{|T|}{|T|-\ell} \ln\ell\right)$ such that $\DD = \cup_{r \in R} \DD_{(T,r)}$.
\end{theorem}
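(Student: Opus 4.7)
The plan is to recast the covering condition as a transversal problem on $T$ and then bound the transversal size by combining random sampling with a structural argument for the deficient bisets.

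First observe that $\DD = \bigcup_{r \in R}\DD_{(T,r)}$ holds if and only if for every $\A \in \DD$ there is $r \in R \cap (T \sem A^+)$, so $R$ must be a transversal for the set family $\FF = \{T \sem A^+ : \A \in \DD\}$. The biset complement $\A^c = (V \sem A^+, V \sem A)$ satisfies $|\p\A^c|=|\p\A|=\ell$ and $d_{G[T]}(\A^c)=d_{G[T]}(\A)$, so $\DD$ is closed under complementation; since $T \sem (\A^c)^+ = A \cap T$, requiring $R$ to be a transversal for $\FF$ is equivalent to asking $R$ to intersect \emph{both} sides of the partition of $T \sem \p\A$ determined by each $\A \in \DD$, i.e., $R$ must hit every component $C$ of $G[T]\sem (\p\A \cap T)$ across all deficient bisets $\A$.

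Next I would sample $R_0 \subs T$ by including each $t \in T$ independently with probability $p = c\ln\ell/(|T|-\ell)$ for a suitable constant $c$, giving $\mathbb{E}[|R_0|]=O(|T|\ln\ell/(|T|-\ell))$, which matches the target bound up to constants. For any $F \in \FF$ with $|F| \geq (|T|-\ell)/2$, the probability that $R_0$ misses $F$ is at most $(1-p)^{|F|} \leq \ell^{-c/2}$; since the number of inclusion-minimal ``large'' sides is polynomially bounded in $|T|$ (each arising from one of the polynomially many minimum $T$-separators), a union bound certifies that $R_0$ covers all large sides with probability bounded away from zero, whence the random step can be derandomized via conditional expectations.

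The main obstacle is the ``small sides'' $F \in \FF$ with $|F|<(|T|-\ell)/2$, where the random sample at density $p$ may miss $F$. Here I would invoke a biset uncrossing argument in the spirit of Jord\'an's lemma [\cite{jor}, Lemma~3.5]: if two inclusion-minimal small sides intersect, there is a set $P\subs V$ of size $O(\ell)$ that hits every small side, so adding $P \cap T$ to $R_0$ finishes the construction; otherwise the inclusion-minimal small sides are pairwise disjoint, and a greedy argument exploiting that each corresponding biset has a complementary side of size $\geq (|T|-\ell)/2$ bounds the number of such disjoint minimal small sides by $O(|T|/(|T|-\ell))$. In either case, an auxiliary set $R_1 \subs T$ of size $O(|T|/(|T|-\ell))$ suffices, and $R = R_0 \cup R_1$ has the claimed size. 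The main technical hurdle will be rigorously establishing the $O(|T|/(|T|-\ell))$ bound on the number of pairwise disjoint minimal small sides, which appears to require the full strength of $\ell$-$T$-connectivity rather than merely a Menger-type bound between individual pairs of terminals.
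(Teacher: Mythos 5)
First, a point of reference: the paper gives no proof of this theorem --- it is imported verbatim from \cite{N-subs} --- so your argument can only be judged on its own merits, and it has a fatal gap which is in fact instructive. The transversal problem you reduce to (hit $T \sem A^+$ for every $\A \in \DD$, equivalently hit every inclusion-minimal side) is not solvable within the claimed size bound. Take $G[T]=K_{\ell,|T|-\ell}$ with $|T| \geq 2\ell$ and independent side $B$: for each $v \in B$ the biset $\A_v$ with $A_v=B \sem \{v\}$ and $\p\A_v$ equal to the side of size $\ell$ lies in $\DD$ and has $T \sem A_v^+=\{v\}$. These are $|T|-\ell$ pairwise disjoint singleton minimal ``small sides'', each with a huge complementary side, so your claimed bound of $O(|T|/(|T|-\ell))$ on the number of pairwise disjoint minimal small sides is false (not merely unproven), and no transversal of size $o(|T|-\ell)$ exists for this family even though the theorem promises $O(\ln \ell)$ roots here. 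Two further steps are also unsound: an $\ell$-connected graph can have exponentially many (in $\ell$) minimum vertex separators, so the union bound over ``polynomially many minimum $T$-separators'' is unjustified; and in the intersecting case Jord\'an's transversal has size $\ell+1$, which exceeds the target $O\left(\f{|T|}{|T|-\ell}\ln\ell\right)$ whenever $|T|-\ell=\Omega(|T|)$.

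The way out is that the complement symmetry must not be discarded after symmetrizing the family. Covering $\DD_{(T,r)}$ is accomplished by making the graph $(\ell+1)$-$(T,r)$-connected, and in an undirected graph this kills every deficient biset separating $r$ from a terminal --- that is, every $\A \in \DD$ with $r \in A \cap T$ as well as those with $r \in T \sem A^+$, since a cover of $\A^c$ is a cover of $\A$. Hence the condition actually needed is only that $R \not\subs \p\A$ for every $\A \in \DD$, i.e., $R$ is contained in no minimum $T$-separator; in the $K_{\ell,|T|-\ell}$ example a single root then suffices, consistent with the stated bound. This weaker requirement is met trivially by any $\ell+1$ terminals, and the content of the theorem is getting down to $O\left(\f{|T|}{|T|-\ell}\ln\ell\right)$: the uniform vector $x \equiv 1/(|T|-\ell)$ is a fractional transversal of the family $\{T \sem \p\A : \A \in \DD\}$ of value $\f{|T|}{|T|-\ell}$, and the remaining work --- which your sketch does not supply --- is a structural bound on the number of essentially distinct relevant separators so that the rounding loses only $\ln \ell$ rather than $\ell$. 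For that argument you should consult \cite{N-subs} directly.
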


We now describe how to cover the family $\DD_{(T,r)}$ for given $r \in T$.

\begin{definition}
The {\bf intersection} and the {\bf union} of two bisets $\A,\B$ are defined by
$\A \cap \B = (A \cap B, A^+ \cap B^+)$ and $\A \cup \B=(A \cup B,A^+ \cup B^+)$.
The biset $\A \sem \B$ is defined by $\A \sem \B=(A \sem B^+,A^+ \sem B)$.
A biset family $\FF$ is called:
\begin{itemize}
\item
{\bf uncrossable} if $\A \cap \B,\A \cup \B \in \FF$ or if $\A \sem \B,\B \sem \A \in \FF$ for all $\A,\B \in \FF$.
\item 
{\bf $T$-intersecting} if $\A \cap \B,\A \cup \B \in \FF$ for any $\A,\B \in \FF$ with $A \cap B \cap T \neq \empt$.
\item
{\bf $T$-co-crossing} if $\A \sem \B,\B \sem \A \in \FF$ for any $\A,\B \in \FF$ with $A \cap B^* \cap T \neq \empt$ and $B \cap A^* \cap T \neq \empt$.  
\end{itemize}
\end{definition}

\begin{lemma} [\cite{N-TALG}] \label{l:tic}
$\DD_{(T,r)}$ is $T$-intersecting and $T$-co-crossing for any $r \in T$.
\end{lemma}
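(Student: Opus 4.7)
The plan is a direct four-point verification. To belong to $\DD_{(T,r)}$, a biset $\A'$ must satisfy (a) $A' \cap T \neq \empt$, (b) $r \in T \sem (A')^+$ (which in particular gives $T \sem (A')^+ \neq \empt$), (c) $d_{G[T]}(\A') = 0$, and (d) $|\p \A'| = \ell$. For any $\A, \B \in \DD_{(T,r)}$, I will check that each of (a)--(d) transfers to $\A \cap \B$ and $\A \cup \B$ under the $T$-intersecting hypothesis $A \cap B \cap T \neq \empt$, and to $\A \sem \B$ and $\B \sem \A$ under the $T$-co-crossing hypothesis. The three standard tools are: submodularity of the biset cut function together with its co-submodular counterpart $|\p(\A \sem \B)| + |\p(\B \sem \A)| \leq |\p\A| + |\p\B|$; the analogous two inequalities for the edge function $d_{G[T]}(\cdot)$; and Menger's theorem in biset form applied inside the $\ell$-connected graph $G[T]$.

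For the $T$-intersecting case, property (a) for $\A \cap \B$ is the hypothesis, and for $\A \cup \B$ it follows from $A \cap T \neq \empt$. For (b), since $r \in T \sem A^+$ and $r \in T \sem B^+$, we have $r \in T \sem (A \cup B)^+ \subs T \sem (A \cap B)^+$. Property (c) follows since $d_{G[T]}(\A \cap \B) + d_{G[T]}(\A \cup \B) \leq d_{G[T]}(\A) + d_{G[T]}(\B) = 0$, forcing both terms to vanish. Property (d) is where the Menger ingredient enters: cut submodularity gives $|\p(\A \cap \B)| + |\p(\A \cup \B)| \leq 2\ell$, while (a)--(c) together with $\ell$-connectivity of $G[T]$ force $|\p(\A \cap \B)|, |\p(\A \cup \B)| \geq \ell$, so equality holds in both. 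The $T$-co-crossing case is structurally identical: the hypotheses $A \cap B^* \cap T, B \cap A^* \cap T \neq \empt$ (where $B^* = V \sem B^+$) give (a) for $A \sem B^+$ and $B \sem A^+$; the root $r$ is outside $A^+$ and outside $B^+$, hence outside $A^+ \sem B$ and $B^+ \sem A$, giving (b); and (c), (d) then follow by inserting the co-submodular versions of the edge and cut inequalities into the same argument.

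The only nontrivial ingredient, which I would isolate as an auxiliary fact, is the Menger lower bound used in (d): in an $\ell$-connected graph $H$, every biset $\A'$ with $A' \neq \empt$, $V(H) \sem (A')^+ \neq \empt$, and $d_H(\A') = 0$ satisfies $|\p \A'| \geq \ell$. This is simply Menger's theorem phrased in biset language, applied here to $H = G[T]$. Once this is in hand, the rest of the proof is pure bookkeeping with the submodular, co-submodular, and edge inequalities, and I do not foresee any further obstacle.
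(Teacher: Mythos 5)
Your proof is correct: the four-point verification via modularity/posimodularity of $|\p\cdot|$, the analogous inequalities for $d_{G[T]}(\cdot)$ (with right-hand side $0$), and the Menger lower bound $|\p\A'\cap T|\geq \ell$ for bisets with $A'\cap T\neq\empt\neq T\sem (A')^+$ and $d_{G[T]}(\A')=0$ in the $\ell$-connected graph $G[T]$ is exactly the standard uncrossing argument. The paper itself gives no proof (the lemma is imported from \cite{N-TALG}), and your argument matches the one given there, so there is nothing to flag beyond the minor point that the Menger step should be phrased for the restriction of the biset to $T$, which you effectively do.
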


\begin{theorem} [\cite{N-TALG}] \label{t:decomp}
There exists a polynomial time algorithm that given a $T$-in\-ter\-secting $T$-co-crossing biset family $\FF$ sequentially
finds $O\left(\f{q+\ell}{q}\right)$ $T$-intersecting uncrossable subfamilies of $\FF$,
such that the union of covers of these subfamilies covers $\FF$, 
where $q=\min\{|A \cap T|:\A \in \FF\}$ and $\displaystyle \ell=\max_{\A \in \FF} |\p\A \cap T|$.
\end{theorem}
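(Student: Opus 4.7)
Proof plan. The strategy is to construct the uncrossable subfamilies as $\FF_r := \{\A \in \FF : r \in A\}$ for terminals $r$ drawn from a small set $R \subs T$ that hits every $\A \in \FF$. Each such $\FF_r$ is $T$-intersecting uncrossable: any two $\A, \B \in \FF_r$ satisfy $r \in A \cap B \cap T$, so by $T$-intersecting $\A \cap \B, \A \cup \B \in \FF$, and both contain $r$ hence lie in $\FF_r$. Moreover, if $R$ hits every $\A \in \FF$, then $\FF = \bigcup_{r \in R} \FF_r$, so covering each $\FF_r$ separately and taking the union of covers covers $\FF$. The task thus reduces to building $R$ of size $O((q+\ell)/q)$.

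The key structural observation is that uncrossability of a pair $\A, \B \in \FF$ can fail only in the ``obstruction'' case $A \cap B \cap T = \empt$ with (WLOG) $A \cap T \subs \p\B$, because $T$-intersecting disposes of $A \cap B \cap T \neq \empt$ and $T$-co-crossing disposes of the complementary case when both $A \cap T \not\subs \p\B$ and $B \cap T \not\subs \p\A$. Hence if $q > \ell$, the obstruction is impossible (it would force $|A \cap T| \leq |\p\B \cap T| \leq \ell < q$), so $\FF$ itself is uncrossable and one subfamily suffices. For the general case my plan is to inspect the cores (inclusion-minimal bisets) of $\FF$: $T$-intersecting with minimality forces $C \cap C' \cap T = \empt$ for distinct cores $\C, \C'$ (otherwise $\C \cap \C' \in \FF$ would be a strictly smaller member), and $T$-co-crossing with minimality forces, for pairs of cores with overlapping plus-sets, one of $C \cap T \subs \p\C'$ or $C' \cap T \subs \p\C$ (otherwise $\C \sem \C', \C' \sem \C \in \FF$ would contradict minimality). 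Since each $|C \cap T| \geq q$, each $|\p\C' \cap T| \leq \ell$, and terminal sets of distinct cores are pairwise disjoint, the in-degree of any core in the resulting tournament is at most $\ell/q$; double-counting the arcs through $\binom{t}{2} \leq t \cdot \ell/q$ then caps the number of cores at $t = O(1+\ell/q)$. Picking one terminal $r_{\C} \in C \cap T$ per core yields $R$ of the required size, and every $\B \in \FF$ is hit because it contains some core (by finite descent in the biset order).

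The main obstacle will be handling ``fully disjoint'' core pairs $\C, \C'$ with $C \cap C'^+ = C^+ \cap C' = \empt$, for which $\C \sem \C' = \C$ and the minimality contradiction in the tournament argument does not fire. Such configurations describe combinatorially independent deficiencies and must either be excluded using extra structure (e.g., the root constraint $r \in V \sem A^+$ satisfied by the $\DD_{(T,r)}$ families used in the applications, which ties cores together through the root neighborhood) or be grouped into an independent packing bound; in either case the number of mutually fully-disjoint cores must also be shown to be $O(1+\ell/q)$, possibly by attaching each to a surrogate boundary witness. Once this is resolved, the algorithmic side is routine: cores of a $T$-intersecting $T$-co-crossing family can be enumerated in polynomial time via iterated minimum-cut computations on $G[T]$, after which $R$ and the subfamilies $\FF_r$ are read off directly.
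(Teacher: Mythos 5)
First, note that the paper does not prove this statement at all: Theorem~\ref{t:decomp} is imported verbatim from \cite{N-TALG}, so your proposal is being measured against the proof in that reference rather than anything in the present text. Against that benchmark there is a genuine gap, and it is exactly the one you flag at the end but do not resolve. Your plan reduces the theorem to the claim that $\FF$ has only $O(1+\ell/q)$ cores, each contributing one terminal to a hitting set $R$. That claim is false. Your tournament argument only produces an arc between two cores $\C,\C'$ when they ``touch'' (so that $\C \sem \C'$ is a proper sub-biset and minimality fires); pairwise fully disjoint cores contribute no arcs, and their number is not controlled by $q$ and $\ell$ in any way. This is not a corner case but the generic situation in the intended application: for $\DD_{(T,r)}$ the cores are the minimal deficient bisets, which (as the paper itself notes in the proof of Lemma~\ref{l:SC}) are pairwise disjoint once there are more than $\ell(\ell+1)$ of them, and there can be $\Theta(n)$ of them. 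So no hitting set of size $O(1+\ell/q)$ exists, and no ``surrogate boundary witness'' can rescue the count. (The parts of your argument that do work --- that each $\FF_r=\{\A: r\in A\}$ is $T$-intersecting uncrossable, and that $q>\ell$ forces the whole family to be uncrossable --- are correct and consistent with the paper's remark that the inflation phase aims at $q\geq \ell+1$.)

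The missing idea is that one uncrossable subfamily must handle many cores simultaneously, and that the decomposition must genuinely be \emph{sequential}, a word in the statement that your static construction ignores. The argument in \cite{N-TALG} works with \emph{halo families}: for a core $\C$, the family of all $\A\in\FF$ that contain $\C$ and no other core. Unions of halo families over suitable collections of cores are shown to be uncrossable regardless of how many pairwise disjoint cores there are, and the $O\!\left(\f{q+\ell}{q}\right)$ bound comes from a progress/potential argument over successive residual families (each round of covering forces surviving bisets to contain several old cores, which drives $q$ up at a rate governed by $\ell/q$), not from counting cores of the original family. Your reduction to a terminal hitting set is at the wrong granularity to reach this bound, so the proof as proposed cannot be completed along the stated lines.
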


\begin{theorem} [Fukunaga \cite{Fu}] \label{t:Fu} \ 
If $\FF$ is a $T$-intersecting uncrossable subfamily of $\DD_T$ 
then there exists a polynomial time algorithm that computes a cover $S$ of $\FF$ of weight $w(S) \leq \f{15 \opt}{k-\ell}$.
\end{theorem}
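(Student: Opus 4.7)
The plan is to analyze a primal--dual algorithm in the spirit of Goemans--Williamson for covering an uncrossable biset family, with the crucial new ingredient provided by the unit disc structure of $G$. First I would set up the covering LP $\min\sum_{v\in V\sem T}w_v x_v$ subject to $\sum_{v\in\Gamma(A)\sem T}x_v\ge 1$ for all $\A\in\FF$ and $x\ge 0$. Exactly as in the proof of Lemma~\ref{l:SC}, for any $S'\subs V\sem T$ with $G[T\cup S']$ being $k$-connected and any $\A\in\FF$, the minimum $(A\cap T,T\sem A^+)$ node-cut in $G[T\cup S']$ uses all $\ell$ nodes of $\p\A$ plus at least $k-\ell$ nodes of $S'$, so $|\Gamma(A)\cap S'|\ge k-\ell$ and $\mathbf{1}_{S'}/(k-\ell)$ is LP-feasible; hence the LP value $\tau_\FF$ satisfies $\tau_\FF\le \opt/(k-\ell)$. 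It therefore suffices to produce a cover $S$ with $w(S)\le 15\,\tau_\FF$.

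The algorithm is the standard primal--dual scheme. Starting from $S=\empt$ and dual $y=0$, at each iteration I would identify the family $\CC$ of inclusion-minimal bisets in $\FF$ not yet covered by $S$ (the current cores), raise $y_\A$ uniformly over $\A\in\CC$ until some node $v\in V\sem T$ becomes tight in its dual constraint $\sum_{\A:\,v\in\Gamma(A)\sem T}y_\A\le w_v$, and add $v$ to $S$. The process stops when every biset of $\FF$ is covered, and concludes with a reverse-delete phase that discards any node whose removal still leaves $S$ feasible. The current cores $\CC$ are computable in polynomial time via min-cut computations between pairs of terminals, exploiting Menger's theorem as in the deficient-biset formulation.

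The main obstacle, and the reason the constant $15$ appears, is bounding the primal--dual ratio. Uncrossability of $\FF$ implies that the active cores $\CC$ at any moment have a laminar-like cross-free structure, which reduces the analysis to the standard Goemans--Williamson statement that, in each dual-raising phase, the nodes eventually retained in $S$ are incident to the cores of $\CC$ with bounded average multiplicity. The unit disc ingredient is that a single node $v$ cannot lie in the neighborhoods of too many pairwise cross-free cores: by the disc-packing considerations underlying Theorem~\ref{t:ZW}, the closed neighborhood of $v$ is a union of $O(1)$ cliques, so only $O(1)$ simultaneously active cores can be ``witnessed'' by $v$. A careful accounting as in Fukunaga's paper turns this $O(1)$ into the explicit constant $15$, yielding
$$
w(S)\le 15\sum_{\A\in\FF} y_\A \le 15\,\tau_\FF \le \f{15\,\opt}{k-\ell},
$$
as claimed.
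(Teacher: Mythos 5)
First, note that the paper does not prove this statement at all: it is imported from Fukunaga \cite{Fu} as a black box, so there is no in-paper proof to compare against. Judged as a reconstruction of Fukunaga's argument, your proposal has the right skeleton: the LP scaling step showing $\tau_\FF \le \opt/(k-\ell)$ is correct and is exactly the argument the paper itself uses in Lemma~\ref{l:SC}, and the overall framework (primal--dual with reverse delete on the minimal violated bisets, plus a geometric packing argument supplying the constant) is indeed the route taken in \cite{Fu}.

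However, the decisive step is missing. For node weights the Goemans--Williamson analysis does not go through as ``standard'': the entire difficulty is to bound $\sum_{\A\in\CC} |\{v\in S: v \mbox{ covers } \A\}|$ by a constant times $|\CC|$, where $S$ is the reverse-deleted solution and $\CC$ is the set of cores active in a given dual-raising phase; for node-weighted covering this fails in general graphs, and minimality of $S$ alone does not rescue it. Your geometric justification --- that $\Gamma(v)$ is a union of $O(1)$ cliques, hence $v$ covers only $O(1)$ active cores --- does not follow as stated: distinct cores of an uncrossable family do have disjoint inner parts, but nothing prevents a single clique inside $\Gamma(v)$ from meeting many different cores, since members of different cores may well be adjacent to one another (the condition $d_{G[T]}(\A)=0$ only forbids edges of $G[T]$ from leaving $A^+$, and the neighbors of $v$ inside the various cores need not be terminals). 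Converting the unit-disc geometry into a bound on the primal--dual gap requires combining the deficiency structure of the bisets (boundaries of size $\ell$, the terminal traces of the cores) with a witness-family argument, and this is precisely the technical content of Fukunaga's proof that your sketch assumes rather than supplies; in particular the constant $15$ cannot be recovered from what you wrote.
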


Combining Lemma~\ref{l:tic} with Theorems \ref{t:decomp} and \ref{t:Fu} we get:

\begin{corollary} \label{c:root}
For any $r \in T$, there exists a polynomial time algorithm that computes a cover $S_r$ of $\DD_{(T,r)}$ such that
if $q=\min\{|A \cap T|:\A \in \DD_{(T,r)}\}$ then 
$$
\f{w(S_r)}{\opt} = O\left(\f{q+\ell}{q(k-\ell)}\right)  \ .
$$
\end{corollary}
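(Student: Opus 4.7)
The plan is to chain the three results that immediately precede the corollary. First, by Lemma~\ref{l:tic}, the family $\FF:=\DD_{(T,r)}$ is both $T$-intersecting and $T$-co-crossing, which is precisely the hypothesis of Theorem~\ref{t:decomp}. Before applying that theorem, I would note that the parameter called $\ell$ there, namely $\max_{\A\in\FF}|\p\A\cap T|$, coincides with the connectivity parameter $\ell$ of Lemma~\ref{l:udg}: as observed in the paragraph following the definition of $\DD_T$, every $\A\in\DD_T$ satisfies $\p\A\subs T$ and $|\p\A|=\ell$, so the maximum in question equals $\ell$.

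Next I would invoke Theorem~\ref{t:decomp} on $\FF$ with $q=\min\{|A\cap T|:\A\in\FF\}$. This produces, in polynomial time, a sequence $\FF_1,\ldots,\FF_s$ of $T$-intersecting uncrossable subfamilies of $\FF$ with $s=O((q+\ell)/q)$, whose covers together cover $\FF$. Then, for each $i$, Theorem~\ref{t:Fu} (Fukunaga's primal-dual algorithm) yields in polynomial time a cover $S_i$ of $\FF_i$ with $w(S_i)\leq 15\,\opt/(k-\ell)$. Setting $S_r=\bigcup_{i=1}^s S_i$ gives a cover of $\DD_{(T,r)}$ satisfying
$$w(S_r)\leq \sum_{i=1}^s w(S_i)\leq s\cdot\frac{15\,\opt}{k-\ell}=O\!\left(\frac{q+\ell}{q(k-\ell)}\right)\opt,$$
which is exactly the claimed ratio.

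There is no genuine obstacle here; the corollary is essentially a mechanical composition of the three stated results, and polynomial running time follows from composing polynomial-time ingredients with a polynomial number of iterations. The only step that merits a sanity check is the identification of the two $\ell$'s noted above, which is what legitimizes feeding the subfamilies produced by Theorem~\ref{t:decomp} directly into Theorem~\ref{t:Fu} with the same denominator $k-\ell$ in the cost bound.
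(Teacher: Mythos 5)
Your proposal is correct and matches the paper's own argument, which is precisely the one-line composition ``Combining Lemma~\ref{l:tic} with Theorems \ref{t:decomp} and \ref{t:Fu}'': decompose $\DD_{(T,r)}$ into $O((q+\ell)/q)$ $T$-intersecting uncrossable subfamilies and cover each at cost $O(\opt/(k-\ell))$. Your extra sanity check that $\max_{\A}|\p\A\cap T|=\ell$ (since $\p\A\subs T$ and $|\p\A|=\ell$ for every deficient biset) is a valid and worthwhile observation that the paper leaves implicit.
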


The algorithm for unit disc graphs is as follows:

\medskip \medskip

\begin{algorithm}[H]
\caption{$(G=(V,E),w,T,p)$ {\bf unit disc graphs}}  
\label{alg:udg}
apply Algorithm~\ref{alg:p} where at step~2 each $S_i$ is as in Lemma~\ref{l:SC} \\
{\bf if} $G[T]$ is $(\ell+1)$ connected then $S \gets \empt$  \\
\Else( \ \ \ \{now $|T| \geq m+p$ and $|A \cap T| \geq m+p-\ell$ for all $\A \in \DD_T$\}) 
{
find a set $R$ of $O\left(\f{|T|}{|T|-\ell} \ln\ell\right)$ roots as in Theorem~\ref{t:subs} \\
for each $r \in R$ compute a cover $S_r$ of $\DD_{(T,r)}$ as in Corollary~\ref{c:root} \\
$S \gets \cup_{r \in R}S_r$
}
\Return{$T \cup S$}
\end{algorithm}

\medskip \medskip

We bound the weight of each of the sets computed. 
Let $T_0$ denote the initial set stored in $T$. By Lemma~\ref{l:SC}, at the end of step~1 we have 
$$
\f{w(T \sem T_0)}{\opt} =\f{O(\ln k)}{k-\ell} \cdot p
$$
Now we bound the weight of the set $S$ computed in steps $3$ to $6$:
$$
\f{w(S)}{\opt}= |R| \cdot  O\left(\f{q+\ell}{q(k-\ell)}\right)=\f{O(\ln \ell)}{k-\ell}\f{|T|}{|T|-\ell}\f{q+\ell}{q}=
\f{O(\ln k)}{k-\ell}\f{(m+p)^2}{(m+p-\ell)^2}
$$
The overall weight of the augmenting set computed is as claimed in Lemma~\ref{l:udg}.

In the case of unit weights, we add arbitrary $\ell$ nodes to $T$; this step invokes an additive term of $O(1)$ to the ratio.
Then we will have $|R|=O(\ln \ell)$ and thus
$$
\f{w(S)}{\opt}= |R| \cdot  O\left(\f{q+\ell}{q(k-\ell)}\right)=\f{O(\ln \ell)}{k-\ell}\f{q+\ell}{q}=
\f{O(\ln k)}{k-\ell}\f{m+p}{m+p-\ell}
$$
The overall weight of the augmenting set computed is as claimed in Lemma~\ref{l:udg}.

This concludes the proof of Lemma~\ref{l:udg}, and thus also the proof of Theorem~\ref{t:SC}.


\end{document}